\documentclass{article}
\usepackage[utf8]{inputenc}
\usepackage{dsfont}
\usepackage{amsthm}
\usepackage{amsmath}
\usepackage{caption}
\usepackage{subcaption}
\usepackage{graphicx}
\usepackage[ruled,vlined]{algorithm2e}
\usepackage[longnamesfirst]{natbib}
\usepackage{pgfplots}
\usepackage[title]{appendix}
\pgfplotsset{width=6.5cm,compat=1.5}
\bibliographystyle{agsm}

\DeclareMathOperator{\LN}{LN}
\DeclareMathOperator{\BS}{BS}
\DeclareMathOperator{\Put}{P}

\usepackage{nicefrac}

\usepackage{xcolor}

\newtheorem{theorem}{Theorem}
\newtheorem{remark}{Remark}
\newtheorem{proposition}{Proposition}
\newtheorem{corollary}{Corollary}
\newtheorem{lemma}{Lemma}

\title{Higher order approximation of option prices in Barndorff-Nielsen and Shephard models}
\author{{\'A}lvaro Guinea Juli{\'a} and Alet Roux}

\begin{document}

\maketitle

\begin{abstract}
   We present an approximation method based on the mixing formula \citep{hull1987pricing,romano1997contingent} for pricing European options in Barndorff-Nielsen and Shephard models. This approximation is based on a Taylor expansion of the option price. It is implemented using a recursive algorithm that allows us to obtain closed form approximations of the option price of any order (subject to technical conditions on the background driving L{\'e}vy process). This method can be used for any type of Barndorff-Nielsen and Shephard stochastic volatility model. Explicit results are presented in the case where the stationary distribution of the background driving L{\'e}vy process is inverse Gaussian or gamma. In both of these cases, the approximation compares favorably to option prices produced by the characteristic function. In particular, we also perform an error analysis of the approximation, which is partially based on the results of \citet{das2022closed}. We obtain asymptotic results for the error of the $N^{\text{th}}$ order approximation and error bounds when the variance process satisfies an inverse Gaussian Ornstein–Uhlenbeck process or a gamma Ornstein–Uhlenbeck process. 
\end{abstract}

\section{Introduction}

In this paper, we present a new approximation method for pricing European options in Barndorff-Nielsen and Shephard stochastic volatility models\break \citep{Barndorff-Nielsen_Shephard2001a,Barndorff-Nielsen_Shephard2001b}. \citet{nicolato2003option} show that it was possible to construct a closed formula for pricing European options in Barndorff-Nielsen and Shephard models, using the characteristic function. For this type of model, estimation results from historical data showed that the parameter representing the mean reversion rate could have large values \citep{hubalek2011joint,GuineaJulia2022}. When the mean reversion rate  is high, the characteristic function becomes numerically unstable (see \citet[Section~4.5]{GuineaJulia2022}). Because of that, it is convenient to have an approximation formula for pricing European options.

A first approximation result for option prices in Barndorff-Nielsen and Shephard models was given by \citet{schroder2014contingent}, who used Laguerre series to construct an approximation for option prices when there is no correlation parameter included in the model. \citet{arai2022approximate} proposed another approximation method, which showed that for these models it is possible to construct an Al\`os type formula \citep{alos2012decomposition}, which is a generalisation of the Hull-White formula \citep{hull1987pricing} that allows for correlation parameter. The approximation method presented in this paper follows a different approach. We construct an approximation formula using the Romano and Touzi formula \citep{romano1997contingent} and then we use the $N^{\text{th}}$ order Taylor expansion to approximate option prices. We construct a bound for the error using the results of \citet{das2022closed}, who built an error bound for a second order approximation of the Heston model \citep{heston1993closed} based on the Romano and Touzi formula. The bound given by \citet{das2022closed} for the high-order derivatives that appear on Taylor's remainder is stochastic; see \citet{Das2024Erratum}. We prove that these high-order derivatives can be bounded by a deterministic function when the volatility is bounded from below, as in the Barndorff-Nielsen and Shephard model. Furthermore, we generalize these results to $N^{\text{th}}$ order approximations and adapt them to the Barndorff-Nielsen and Shephard model. This allows us to get asymptotic results for the error approximation and error bounds when the variance process follows an inverse Gaussian Ornstein–Uhlenbeck process or a gamma Ornstein–Uhlenbeck process.

This paper is organised as follows. Firstly, we introduce the Barndorff-Nielsen and Shephard model in Section \ref{PreSect}. The approximation formula is derived in Section \ref{SectApprox}. This approximation formula depends on several moments of the integrated variance process. We build a recursive procedure to compute these moments in Section \ref{SectMoments}. The approximation error is discussed in Section \ref{ErrorSect}. Numerical results are shown in Section \ref{NumExamplesSect}, where we compare the numerical values given by our approximation method with the prices given by the characteristic function. In Appendix \ref{DerBoundsect}, we construct a deterministic bound for the high-order derivatives of the error term. Analytical formulas for the characteristic functions are given in Appendix \ref{CFsect}.

\section{Preliminaries}\label{PreSect}

Consider the Barndorff-Nielsen and Shephard model with a fixed time horizon $T>0$ \cite[cf.][]{Barndorff-Nielsen_Shephard2001b,Barndorff-Nielsen_Shephard2001a}. Take as given an equivalent martingale measure $\mathds{Q}$ in the class of structure-preserving equivalent martingales introduced by \citet[Theorem 3.2]{nicolato2003option}. The price process $S=(S_t)$ of the stock is defined as
\[
 S_t = e^{X_t} \text{ for all }t\in[0,T],
\]
where the dynamics of the log price process $X=(X_t)$ and the stochastic volatility process $\sigma=(\sigma_t)$ under $\mathds{Q}$ are given by
\begin{align}
 dX_t &= \left(r-\lambda \kappa(\rho) - \tfrac{1}{2}\sigma_t^2\right)dt + \sigma_tdW_t + \rho dZ_{\lambda t}, & X_0 &= x_0, \label{eq:X:SDE} \\
 d\sigma^2_t &= -\lambda \sigma^2_t dt + dZ_{\lambda t}, & \sigma^2_0 &>0.\label{eq:Sigma:SDE}
\end{align}
Here $r\in\mathds{R}$ is the interest rate, $\lambda>0$ is the mean reversion rate, $W = (W_t)$ is a Brownian motion, $Z= (Z_t)$ is a (L\'evy) subordinator process independent of $W$ and $\kappa$ is the cumulant generating function of $Z_1$, in other words, 
\[\kappa(\theta)=\ln E_{\mathds{Q}}\left[e^{\theta Z_1}\right]\text{ for all }\theta\in\mathds{R}.\] 
We define
\[
\hat{\kappa}=\sup\{\theta\in\mathds{R}:\kappa(\theta)<\infty\},
\]
assume that $\hat{\kappa}>0$ and that the correlation parameter $\rho\in\mathds{R}$ satisfies
\begin{equation}\label{eq:rho_cond}
     \rho<\hat{\kappa}.
\end{equation} This is a relaxation of the usual condition $\rho\le0$.

Since $Z$ is a non-negative L\'evy process, it has finite total variation on a bounded interval  \citep[Proposition 3.10]{cont2004financial} and the cumulant generating function of $Z_1$ can be written as
\[\kappa(\theta) =  b_Z +  \int_0^\infty \left(e^{\theta x} -1\right) v(dx)  \text{ for } u \in \mathds{R},\]  where $b_Z\geq 0$ and $v$ is the L\'evy measure of $Z$. By the L\'evy-Khintchine representation \citep[p.~37]{sato1999levy},   the process $Z$ has L\'evy triplet $(\gamma,0, v)$ where
\[\gamma = b_Z + \int_{|x|\leq 1} x v(dx)\] \citep[Corollary 3.1]{cont2004financial}.

The process $\sigma^2$ is an Ornstein-Uhlenbeck process with stationary distribution $D$. These processes are usually called $D$-Ornstein-Uhlenbeck processes. This type of processes are built from self-decomposable distributions $D$\break \citep{valdivieso2009maximum}. Examples of self-decomposable distributions are the gamma distribution and the inverse Gaussian distribution \citep[Section 5.5]{schoutens2003levy}. It is possible to show the existence of an Ornstein-Uhlenbeck process with stationary distribution $D$ if the distribution $D$ is self-decomposable \citep{valdivieso2009maximum}. In this paper we use two types of $D$-Ornstein-Uhlenbeck processes as examples, namely the gamma  Ornstein-Uhlenbeck process and the inverse Gaussian Ornstein-Uhlenbeck process.

The solution of the stochastic differential equation \eqref{eq:Sigma:SDE} can be written as \[\sigma_t^2 = e^{-\lambda t} \left( \sigma_0^2 +\int_0^t e^{\lambda s} dZ_{\lambda s}\right) \text{ for all } t\geq 0.\] The distribution of the random variable $\sigma_t^2$ will depend on the self-decomposable distribution $D$. Define the integrated variance as
\[
 I_t = \int_0^t \sigma_s^2ds \text{ for all }t\ge0.
\]
We have\begin{equation} \label{eq:I-SDE}
    I_t = \sigma^2_0 \alpha_{0,t} +  \int_0^t \alpha_{s,t} dZ_{\lambda s}
\end{equation}  where \begin{equation}\label{eq:alpha_fun}
    \alpha_{s,t} =  \frac{1}{\lambda}(1-e^{-\lambda(t-s)}) \text{ for all } s,t\geq 0.
\end{equation} \citep[(2.5)]{nicolato2003option}.

Let $(\mathcal{F}^W_t)$ and $(\mathcal{F}^Z_t)$ be the filtrations generated by $W$ and $Z$, respectively, and define the filtration $(\mathcal{F}_t)$ as
\[
 \mathcal{F}_t = \mathcal{F}^W_t \vee \mathcal{F}^Z_{\lambda t} \text{ for all } t\geq 0.
\]

It is essential for the approximation below to observe that the stochastic differential equation \eqref{eq:X:SDE} can be separated into terms connected with $W$ and $Z$, respectively. Defining the auxiliary process $P=(P_t)$ as 
\[
 P_t = e^{\rho Z_{\lambda t} - \lambda t\kappa(\rho)} \text{ for all }t\ge0,
\]
we obtain
\begin{align*}
 X_t &= x_0 + rt - \tfrac{1}{2}I_t + \int_0^t\sigma_sdW_s + \ln P_t, \\
 S_t &= S_0P_te^{rt - \frac{1}{2}I_t + \int_0^t\sigma_sdW_s},
\end{align*}
for all $t\geq 0$, which means that the distribution of $S_t$ conditional on $\mathcal{F}^Z_{\lambda t}$ is lognormal, in other words,
\begin{equation} \label{eq:St-given-FZlt}
 S_t|\mathcal{F}^Z_{\lambda t}\sim \LN\left(\ln S_0P_t + rt - \tfrac{1}{2}I_t,I_t\right) \text{for all } t>0.
\end{equation}

\section{Approximation}\label{SectApprox}

We focus on the payoff of a European put with expiration date $T>0$ and strike $K\ge0$, the reason being that the payoff is bounded. The price at time $0$ of the put option under $\mathds{Q}$ is
\begin{align*}
\Pi_{\Put} (S_0,K,T)
&= E_{\mathds{Q}}\left[ e^{-rT}  (K-S_T)^+\right] \\
&= E_{\mathds{Q}}\left[ E_{\mathds{Q}}\left[e^{-rT} (K-S_T)^+\middle|  \mathcal{F}_{\lambda T}^Z  \right] \right],
\end{align*} where $\Pi_P(S_0,K,T)$ represents the put price in the Barndorff-Nielsen and Shephard model with initial value of the stock price $S_0$, strike price $K$ and expiration date $T$.

The distribution of $S_T$ conditional on $\mathcal{F}_{\lambda T}^T$ is lognormal by \eqref{eq:St-given-FZlt}, and therefore standard arguments lead to a representation in terms of the familiar Black-Scholes-Merton formula, namely
\[
    E_{\mathds{Q}}\left[e^{-rT} (K-S_T)^+\middle|  \mathcal{F}_{\lambda T}^Z  \right] = \BS_{\Put}\left(S_0P_T, I_T \right),
\]
where
\begin{align}
  \BS_{\Put}(x,y) &= K e^{-rT} \Phi(-d_-(x,y)) - x\Phi(-d_+(x,y)), \label{eq:BSPut} \\
  d_{\pm}(x,y) &= \frac{1}{\sqrt{y}} \left(\ln\frac{x}{K} + rT \pm \tfrac{1}{2}y\right), \label{eq:d+-}
\end{align}
and $\Phi$ is the cumulative distribution function of the standard normal distribution. Hence the value at time $0$ of the put option is:
\begin{equation}\label{eq:Put-Price}
    \Pi_{\Put}(S_0,K,T) = E_{\mathds{Q}}\left[  \BS_{\Put}\left(S_0P_T, I_T \right) \right].
\end{equation}
This representation dates back to the work of \citet{romano1997contingent} (and \citet{hull1987pricing} in the case $\rho=0$).

The aim is to derive an approximation to \eqref{eq:Put-Price} using a Taylor series expansion of $\BS_{\Put}$ around the point 
$\left(S_0, E_{\mathds{Q}}[I_T] \right)$. To this end, we have the following result.

\begin{theorem}\label{ApproxThm}
    For any $N\in\mathds{N}$, if 
\begin{equation}\label{MomentEq}
     E_{\mathds{Q}}\left[\lvert P_T-1\rvert^{n-k} \left\lvert I_T- E_{\mathds{Q}}[I_T] \right\rvert^k \right] < \infty
\end{equation} for all $n=1,\ldots,N+1$ and $k=0,\ldots,n$, then $\Pi_{\Put}$ of \eqref{eq:Put-Price} can be approximated as
\begin{equation} \label{eq:Taylor-approx}
 \Pi_{\Put}(S_0,K,T) = \Pi_N + R_N  \approx \Pi_N,
\end{equation}
where the \emph{approximation} is
\begin{multline} \label{eq:Taylor-approx-N}
 \Pi_N = \BS_{\Put}(S_0,E_{\mathds{Q}}[I_T]) \\
 + \sum_{n=2}^N \frac{1}{n!} \sum_{k=0}^n \binom{n}{k} S_0^{n-k}E_{\mathds{Q}}\left[(P_T-1)^{n-k} \left(I_T - E_{\mathds{Q}}[I_T]\right)^k\right] \frac{\partial^n\BS_{\Put}}{\partial x^{n-k} \partial y^k} (S_0,E_{\mathds{Q}}[I_T])
\end{multline}and the \emph{remainder term} is
\begin{multline} \label{eq:Taylor-remainder-N}
 R_N
 = \frac{1}{N!}\sum_{n=0}^{N+1} \binom{N+1}{n}S_0^{N+1-n}E_{\mathds{Q}}\left[(P_T-1)^{N+1-n} \left(I_T - E_{\mathds{Q}}[I_T]\right)^n \phantom{\int_0^1}\right. \\
 \left.\times\int_0^1(1-u)^N \frac{\partial^{N+1}\BS_{\Put}}{\partial x^{N+1-n} \partial y^n} ((1-u)S_0+uP_TS_0,(1-u)E_{\mathds{Q}}[I_T]+uI_T)du\right].
\end{multline}
\end{theorem}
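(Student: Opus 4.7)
The plan is to apply the standard multivariate Taylor theorem with integral remainder to the smooth function $\BS_{\Put}:(0,\infty)\times(0,\infty)\to\mathds{R}$ at the random argument $(S_0 P_T, I_T)$, expanding about the deterministic center $(S_0, E_{\mathds{Q}}[I_T])$. The displacement vector is $(S_0(P_T-1),\, I_T-E_{\mathds{Q}}[I_T])$, so the binomial expansion of the $n^{\text{th}}$ order differential produces exactly the factors $S_0^{n-k}(P_T-1)^{n-k}(I_T-E_{\mathds{Q}}[I_T])^k\binom{n}{k}$ appearing in \eqref{eq:Taylor-approx-N}, and the $(N+1)^{\text{th}}$ order integral remainder in $u\in[0,1]$ yields the bracketed expression in \eqref{eq:Taylor-remainder-N}. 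Since $S_0,P_T>0$ and $I_T,E_{\mathds{Q}}[I_T]>0$, the interpolated arguments $(1-u)S_0+uP_TS_0$ and $(1-u)E_{\mathds{Q}}[I_T]+uI_T$ remain in the open domain of smoothness of $\BS_{\Put}$, so Taylor's theorem is applicable pointwise $\omega$-by-$\omega$.

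The next step is to take expectation under $\mathds{Q}$ and identify the $n=1$ contribution as zero, which is why the sum in \eqref{eq:Taylor-approx-N} may start at $n=2$. The first-order term in $y$ carries the factor $E_{\mathds{Q}}[I_T-E_{\mathds{Q}}[I_T]]=0$ by definition of centering, and the first-order term in $x$ carries $S_0\,E_{\mathds{Q}}[P_T-1]$; this last expectation vanishes because $P$ is a $\mathds{Q}$-martingale. Indeed, $P_T = e^{\rho Z_{\lambda T}-\lambda T\kappa(\rho)}$ with $\kappa(\rho)<\infty$ by assumption \eqref{eq:rho_cond}, and from the independent increments of the Lévy subordinator $Z$ together with $\kappa(\rho)=\ln E_{\mathds{Q}}[e^{\rho Z_1}]$ one gets $E_{\mathds{Q}}[e^{\rho Z_{\lambda T}}]=e^{\lambda T\kappa(\rho)}$, hence $E_{\mathds{Q}}[P_T]=1$.

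The only analytic point that requires justification is the interchange of $E_{\mathds{Q}}$ with the finite sums in \eqref{eq:Taylor-approx-N} and with the inner integral $\int_0^1(1-u)^N\,du$ in \eqref{eq:Taylor-remainder-N}. For the main part, the summands $S_0^{n-k}(P_T-1)^{n-k}(I_T-E_{\mathds{Q}}[I_T])^k\,\partial^n\BS_{\Put}/\partial x^{n-k}\partial y^k(S_0,E_{\mathds{Q}}[I_T])$ have deterministic partial derivatives, so their expectations split at once under hypothesis \eqref{MomentEq}. For the remainder, I would apply Fubini–Tonelli to the joint measure $\mathds{Q}\otimes du$ on $\Omega\times[0,1]$. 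This is the step where I expect the main technical friction: one must bound $\bigl|\partial^{N+1}\BS_{\Put}/\partial x^{N+1-n}\partial y^n\bigr|$ at the interpolated point by an expression whose product with $|P_T-1|^{N+1-n}|I_T-E_{\mathds{Q}}[I_T]|^n$ is uniformly integrable in $u$. Once one observes that the Black–Scholes Greeks admit polynomial-type majorants in $x$ on any set where $y$ is bounded below (and $y$ is bounded below here by $(1-u)E_{\mathds{Q}}[I_T]>0$ for $u<1$, with the integrable singularity at $u=1$ absorbed by $(1-u)^N$), a Hölder argument combined with \eqref{MomentEq} closes Fubini. The sharp quantitative form of this bound is the subject of the error analysis in Section \ref{ErrorSect}, so at this stage it suffices to invoke \eqref{MomentEq} as the precise hypothesis under which the interchange is valid.
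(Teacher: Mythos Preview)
Your approach is essentially identical to the paper's: apply the multivariate Taylor theorem with integral remainder to $\BS_{\Put}(S_0P_T,I_T)$ about $(S_0,E_{\mathds{Q}}[I_T])$ and then take expectations; you in fact supply more justification than the paper does on why the $n=1$ term vanishes (via $E_{\mathds{Q}}[P_T]=1$) and on the Fubini step for the remainder, both of which the paper simply elides.

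One small slip worth correcting: your lower bound $(1-u)E_{\mathds{Q}}[I_T]$ on the second argument is too crude and manufactures a phantom singularity at $u=1$. Since $I_T\ge\sigma_0^2\alpha_{0,T}>0$ almost surely (this is exactly \eqref{eq:JLowerBound} in Appendix~\ref{DerBoundsect}), the convex combination $(1-u)E_{\mathds{Q}}[I_T]+uI_T$ is bounded below by $\sigma_0^2\alpha_{0,T}$ uniformly in $u\in[0,1]$, so no appeal to the factor $(1-u)^N$ is needed to control the $y$-behaviour of the Greeks.
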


\begin{proof}
 Using the $N^\text{th}$ order Taylor series with remainder in integral form \citep[see, for example,][chapter 6]{Duistermaat2010}, we obtain
\begin{multline*} \BS_{\Put}\left(S_0 P_T, I_T \right) = \BS_{\Put}(S_0,E_{\mathds{Q}}[I_T]) \\
 \begin{aligned}
 &+ \sum_{n=1}^N \frac{1}{n!} \sum_{k=0}^n \binom{n}{k} (S_0P_T-S_0)^{n-k} \left(I_T - E_{\mathds{Q}}[I_T]\right)^k \frac{\partial^n\BS_{\Put}}{\partial x^{n-k} \partial y^k} (S_0,E_{\mathds{Q}}[I_T]) \\
 &+ \frac{1}{N!}\sum_{n=0}^{N+1} \binom{N+1}{n}(S_0P_T-S_0)^{N+1-n} \left(I_T - E_{\mathds{Q}}[I_T]\right)^n \\
 &\quad\times\int_0^1(1-u)^N \frac{\partial^{N+1}\BS_{\Put}}{\partial x^{N+1-n} \partial y^n} ((1-u)S_0+uS_0P_T,(1-u)E_{\mathds{Q}}[I_T]+uI_T)du.
 \end{aligned}
\end{multline*} It is well known (see, for example, Proposition \ref{PropDer} in Section \ref{ErrorSect}) that the derivatives of $\BS_{\Put}$ of all orders are well defined as long as they are evaluated at positive points. The derivatives that appear in the approximation are well defined since $S_0>0$ and $E_{\mathds{Q}}[I_T]>0$. The derivatives that appear in the error term, are well defined almost surely because
\begin{eqnarray}
    (1-u)S_0+uS_0P_T > 0, \\
    (1-u)E_{\mathds{Q}}[I_T]+uI_T>0,
\end{eqnarray}
 with probability $1$, for all $u \in(0,1)$.
The approximation \eqref{eq:Taylor-approx}--\eqref{eq:Taylor-remainder-N} is obtained after taking the expected value under $\mathds{Q}$.
\end{proof}

The integrability condition \eqref{MomentEq} as well as a method for computing the moments in \eqref{eq:Taylor-approx-N}, will be covered in Section \ref{SectMoments}. We end this section by explicitly stating the second order approximation.
\begin{corollary}
The second order approximation for the European put option is \begin{align}
\Pi_2 &=  \BS_{\Put}(S_0,E_{\mathds{Q}}[I_T])  \nonumber\\
& \qquad + \frac{1}{2} \frac{\partial^2 BS_{\Put} }{\partial x^2} \left(S_0,E[I_T]\right) S_0^2E_{\mathds{Q}}\left[(P_T -1)^2\right] \nonumber\\
& \qquad + \frac{1}{2} \frac{\partial^2 BS_{\Put} }{\partial y^2} \left(S_0,E[I_T]\right)  E_{\mathds{Q}}\left[\left( I_T - E_{\mathds{Q}}[I_T]\right)^2\right] \nonumber\\
& \qquad + \frac{\partial^2 BS_{\Put} }{\partial x \partial y} \left(S_0,E[I_T]\right) S_0 E_{\mathds{Q}}\left[ (P_T -1) (I_T - E[I_T])\right].\label{2ordereq}
\end{align} where \begin{align}
    E_{\mathds{Q}}\left[ I_T \right] & = \alpha_{0,T} \left(\sigma_0^2 - \kappa'(0)\right) + \kappa'(0) T, \label{eq:Exp_IT}\\
E_{\mathds{Q}}\left[ (I_T -E_{\mathds{Q}} [I_T])^2 \right] & = \frac{1}{\lambda^2} k''(0) \left( \lambda T - \frac{3}{2} + 2e^{-\lambda T}   - \frac{1}{2}e^{-2\lambda T}\right), \nonumber\\
E_{\mathds{Q}}\left[(P_T -1)^2\right] 
& = e^{\lambda T \left( - 2\kappa(\rho) + \kappa( 2\rho) \right) }  -1, \nonumber\\
E_{\mathds{Q}}\left[ (P_T -1) (I_T - E[I_T])\right] 
& = \left(\kappa'(\rho) -\kappa'(0)\right) \left( T  - \alpha_{0,T} \right). \nonumber
\end{align} where $\kappa$ is the cumulant generating function of $Z_1$.
\end{corollary}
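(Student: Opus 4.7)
The plan is to obtain the formula for $\Pi_2$ as a direct specialisation of Theorem \ref{ApproxThm} with $N=2$, and then to compute each of the four moments appearing there by exploiting the explicit form \eqref{eq:I-SDE} of $I_T$ together with the exponential form of $P_T$.

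First I would set $N=2$ in \eqref{eq:Taylor-approx-N} and observe that the sum only needs to run from $n=2$. The would-be $n=1$ terms would involve $E_{\mathds{Q}}[P_T-1]$ and $E_{\mathds{Q}}[I_T-E_{\mathds{Q}}[I_T]]$; the second is zero by construction, and the first is zero because $P_t=e^{\rho Z_{\lambda t}-\lambda t\kappa(\rho)}$ satisfies $E_{\mathds{Q}}[P_t]=1$ for every $t\ge 0$ directly from the definition of $\kappa$. This immediately yields \eqref{2ordereq}.

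Next I would compute the four moments. For $E_{\mathds{Q}}[I_T]$, I would take expectations in \eqref{eq:I-SDE} and use the standard identity $E_{\mathds{Q}}\bigl[\int_0^T f(s)\,dZ_{\lambda s}\bigr]=\lambda\kappa'(0)\int_0^T f(s)\,ds$ for deterministic $f$, together with the direct evaluation $\lambda\int_0^T\alpha_{s,T}\,ds=T-\alpha_{0,T}$ from \eqref{eq:alpha_fun}. For the variance of $I_T$, I would use that the deterministic integrand part of \eqref{eq:I-SDE} contributes nothing to the variance and that $\mathrm{Var}_{\mathds{Q}}\bigl(\int_0^T f(s)\,dZ_{\lambda s}\bigr)=\lambda\kappa''(0)\int_0^T f(s)^2\,ds$ (again a Lévy isometry), then carry out the elementary integral of $\alpha_{s,T}^2=\lambda^{-2}(1-e^{-\lambda(T-s)})^2$ and collect terms. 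For $E_{\mathds{Q}}[(P_T-1)^2]$, I would expand the square and use $E_{\mathds{Q}}[e^{\theta Z_{\lambda T}}]=e^{\lambda T\kappa(\theta)}$ for $\theta=\rho$ and $\theta=2\rho$.

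The most delicate piece is the covariance $E_{\mathds{Q}}[(P_T-1)(I_T-E_{\mathds{Q}}[I_T])]$, which I would reduce to $E_{\mathds{Q}}[P_T I_T]-E_{\mathds{Q}}[I_T]$ using $E_{\mathds{Q}}[P_T]=1$. The natural way to compute $E_{\mathds{Q}}[P_T I_T]$ is via an Esscher-type change of measure $d\tilde{\mathds{Q}}/d\mathds{Q}=P_T$, under which $Z$ remains a Lévy subordinator (on $[0,\lambda T]$) with shifted cumulant $\tilde\kappa(\theta)=\kappa(\theta+\rho)-\kappa(\rho)$, so that $E_{\tilde{\mathds{Q}}}[Z_1]=\kappa'(\rho)$. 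Applying the same Lévy-integral identity as for $E_{\mathds{Q}}[I_T]$ but now under $\tilde{\mathds{Q}}$ gives $E_{\mathds{Q}}[P_T I_T]=\sigma_0^2\alpha_{0,T}+\kappa'(\rho)(T-\alpha_{0,T})$, and subtracting $E_{\mathds{Q}}[I_T]$ produces the claimed $(\kappa'(\rho)-\kappa'(0))(T-\alpha_{0,T})$. The main technical obstacle is justifying this measure change (equivalently, arguing by Itô's product rule applied to $P_tI_t$ and taking expectations of the pure-jump compensator); both routes rely on the condition $\rho<\hat\kappa$ assumed in \eqref{eq:rho_cond} to guarantee integrability of $P_T$ and of $P_T Z_{\lambda T}$.
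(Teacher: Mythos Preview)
Your proposal is correct, and the underlying ideas coincide with the paper's. The paper's own proof is just two lines: it cites \citet{barndorff2003integrated} for \eqref{eq:Exp_IT} and defers all remaining moments to Corollary~\ref{cor:Moments}, which in turn is built on Lemma~\ref{theo:MomentsR}. That lemma is precisely the Esscher-type change of measure you invoke for the covariance (your $\tilde{\mathds{Q}}$ is the paper's $\bar{\mathds{Q}}$ with $\ell=1$), packaged as a recursive formula $H_{\ell,h}$ obtained by differentiating the moment generating function of $F_T$ under $\bar{\mathds{Q}}$. So the key mechanism is identical; the difference is only that you compute each of the four low-order moments directly (L\'evy mean/isometry for $E_{\mathds{Q}}[I_T]$ and $\operatorname{Var}_{\mathds{Q}}(I_T)$, MGF of $Z_{\lambda T}$ for $E_{\mathds{Q}}[(P_T-1)^2]$, Esscher tilt for the cross moment), whereas the paper reads them off from the general recursion $H_{\ell,h}$ in \eqref{eq:HMoments}. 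Your route is more self-contained for this corollary in isolation; the paper's route has the advantage that the same machinery immediately delivers all the higher-order moments needed for $\Pi_N$ with $N>2$.
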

\begin{proof}
   Equation \eqref{eq:Exp_IT} is given by \citet[p. 289]{barndorff2003integrated}, while the rest of the moments can be computed from Corollary \ref{cor:Moments} in Section \ref{SectMoments}.
\end{proof}

\section{Moments of price and integrated volatility process} \label{SectMoments}

The approximation \eqref{eq:Taylor-approx-N} contains mixed central moments of the form 
\begin{equation}\label{Moment1}
   E_{\mathds{Q}}\left[(P_T-1)^{n-k} \left(I_T- E_{\mathds{Q}}[I_T] \right)^k \right] 
\end{equation}
where $n\in\mathds{N}$ and $k=0,1,\ldots,n$. These moments can be calculated trivially in some cases, for example, they are equal to zero when $k=1$ or $n-k=1$. Another special case, namely moments that are not mixed, occurs when $k=0$ or $n=k$. In this section, we develop a method for computing these moments for all $k$ and $n$, whenever they are well defined. But before computing these moments, we will construct a recursive formula that we will use to calculate the moments.
\begin{lemma} \label{theo:MomentsR}
Let the process $F=(F_t)$ be defined as
\begin{equation}\label{eq:Rt}
    F_t =  f(t) +  \int_0^t (c \alpha_{s,t} + d) dZ_{\lambda s} \text{ for all }  t\geq 0, 
\end{equation} where $f: [0,\infty) \to \mathds{R}$ is a continuous deterministic function and $c,d\in\mathds{R}$. 

For any $c,d\geq 0$ with  $\max\{c,d\}>0$, any $\ell\in\mathds{N}_0$ such that $\ell\rho < \hat{\kappa}$ and any $k\in\mathds{N}_0$ such that $\kappa$ is $k$ times continuously differentiable in an open interval containing $0$, we have
\[
  E_{\mathds{Q}}\left[P_T^\ell F_T^k \right] = e^{\lambda T (\kappa(\ell \rho)-\ell \kappa(\rho))} H_{\ell,k},
 \]
 where $H_{\ell,k}$ satisfies the recursive relationship
  \begin{align}
  H_{\ell,0} &= 1, \nonumber\\
  H_{\ell,h} &= f(T) H_{\ell,h-1}  +  \lambda \sum_{i=1}^h \binom{h-1}{i-1} H_{\ell,h-i} 
\kappa^{(i)} (\ell \rho) \int_0^T  \left( c \alpha_{s,T}+d\right)^i ds \label{eq:LemmaH}
  \end{align}
 for all $h=1,\ldots,k$.
\end{lemma}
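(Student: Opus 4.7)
The plan is to compute the bivariate moment $E_{\mathds{Q}}[P_T^\ell F_T^k]$ by first evaluating the parametric expectation
\[
\phi(\alpha) := E_{\mathds{Q}}\!\left[P_T^\ell \exp(\alpha F_T)\right]
\]
in closed form, and then obtaining $E_{\mathds{Q}}[P_T^\ell F_T^k] = \phi^{(k)}(0)$ by differentiating under the expectation. Writing $P_T^\ell = \exp(\ell\rho Z_{\lambda T} - \ell\lambda T \kappa(\rho))$, representing $Z_{\lambda T} = \int_0^T dZ_{\lambda s}$ and combining with \eqref{eq:Rt}, the integrand $P_T^\ell \exp(\alpha F_T)$ becomes a deterministic factor times the exponential of a single stochastic integral $\int_0^T (\ell\rho + \alpha(c\alpha_{s,T}+d))\,dZ_{\lambda s}$ against the subordinator $Z_{\lambda\cdot}$. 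Applying the standard Lévy--Khintchine identity for deterministic integrands,
\[
E_{\mathds{Q}}\!\left[\exp\!\left(\int_0^T h(s)\, dZ_{\lambda s}\right)\right] = \exp\!\left(\lambda \int_0^T \kappa(h(s))\,ds\right),
\]
valid whenever $h$ takes values below $\hat{\kappa}$ on $[0,T]$, yields
\[
\phi(\alpha) = e^{\lambda T(\kappa(\ell\rho)-\ell\kappa(\rho))}\, \psi(\alpha), \quad \psi(\alpha):=\exp\!\left(\alpha f(T)+\lambda\int_0^T [\kappa(\ell\rho+\alpha g(s))-\kappa(\ell\rho)]\,ds\right),
\]
where $g(s):=c\alpha_{s,T}+d$. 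Setting $H_{\ell,k}:=\psi^{(k)}(0)$ then matches the claimed product form, and $H_{\ell,0}=\psi(0)=1$ is immediate.

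For the recursion, I would logarithmically differentiate $\psi$ once to get
\[
\psi'(\alpha)=\psi(\alpha)A(\alpha), \qquad A(\alpha) := f(T)+\lambda\int_0^T \kappa'(\ell\rho+\alpha g(s))\, g(s)\,ds,
\]
and then apply the Leibniz formula to produce
\[
H_{\ell,h}=\psi^{(h)}(0)=\sum_{j=0}^{h-1}\binom{h-1}{j} H_{\ell,h-1-j}\, A^{(j)}(0).
\]
Differentiating $A$ under the integral gives $A(0)=f(T)+\lambda\kappa'(\ell\rho)\int_0^T g(s)\,ds$ and $A^{(j)}(0)=\lambda \kappa^{(j+1)}(\ell\rho)\int_0^T g(s)^{j+1}\,ds$ for $j\ge 1$. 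After reindexing $i=j+1$ and peeling off the $f(T)$ summand inside $A(0)$, the two pieces reassemble into the single sum in \eqref{eq:LemmaH}.

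The main obstacles are analytic rather than combinatorial. One must justify (i) the exponential-moment identity for the stochastic integral, which requires $\ell\rho+\alpha g(s)<\hat{\kappa}$ uniformly for $s\in[0,T]$ and $\alpha$ in a neighborhood of $0$; and (ii) the interchange of differentiation and expectation needed to identify $H_{\ell,k}$ with a $k$-th derivative of $\phi$, which requires finiteness of $\phi$ and of the $k$-th $F_T$-moment under the Esscher-tilted law $(P_T^\ell/E_{\mathds{Q}}[P_T^\ell])\,d\mathds{Q}$. Both are covered by the hypotheses: $\ell\rho<\hat{\kappa}$ gives an open neighborhood on which the Lévy cumulant $\kappa$ is analytic (so $\kappa^{(i)}(\ell\rho)$ exists for all $i$), the assumed $k$-fold continuous differentiability of $\kappa$ near $0$ supplies the moments needed to differentiate $k$ times, and the positivity $c,d\ge 0$ with $\max\{c,d\}>0$ makes $g$ continuous, positive and bounded on $[0,T]$, so that $\ell\rho+\alpha g(s)$ stays in the domain of $\kappa$ for sufficiently small $\alpha$.
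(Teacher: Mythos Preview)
Your proposal is correct and follows essentially the same route as the paper: the paper introduces the Esscher change of measure $d\bar{\mathds{Q}}/d\mathds{Q}=\bar P_T$ and computes the moment generating function $\bar M_{F_T}(\theta)=E_{\bar{\mathds{Q}}}[e^{\theta F_T}]$, which is exactly your $\psi$, via the same L\'evy--Khintchine identity (citing \citet[Lemma~2.1]{nicolato2003option}), and then derives the recursion for $\bar M_{F_T}^{(h)}$ by induction before evaluating at $\theta=0$. The only cosmetic difference is that the paper re-derives the Leibniz identity for $(\psi A)^{(h-1)}$ by an explicit inductive step, whereas you invoke Leibniz directly and then reindex; the analytic justifications (domain condition $\ell\rho+\alpha g(s)<\hat\kappa$ for small $\alpha$, boundedness of $g$ from $c,d\ge 0$) are handled in the same way.
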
 \begin{proof}
 Define the process $\bar{P}=(\bar{P}_t)$ as
 \[
  \bar{P}_t = e^{\ell\rho Z_{\lambda t} - \lambda t \kappa(\ell \rho)} \text{ for all }t\ge0;
 \]
 then $\bar{P}$ is a $\mathds{Q}$-martingale with respect to $(\mathcal{F}_t)$ \cite[Theorem 13.50]{pascucci2011pde}. Define the probability measure $\bar{\mathds{Q}}$ on $\mathcal{F}_T$ by means of the Radon-Nikodym density
 \begin{equation} \label{eq:dQl/dQ}
  \frac{d\bar{\mathds{Q}}}{d\mathds{Q}} = \bar{P}_T;
 \end{equation}
 then
 \[
    E_{\mathds{Q}}\left[P_T^\ell F_T^k \right] = e^{\lambda T (\kappa(\ell \rho)-\ell \kappa(\rho))} E_{\mathds{Q}}\left[\bar{P}_T F_T^k \right] = e^{\lambda T (\kappa(\ell \rho)-\ell \kappa(\rho))} E_{\bar{\mathds{Q}}}\left[F_T^k \right].
\]
Define 
\[
 H_{\ell,h} = E_{\bar{\mathds{Q}}}\left[F_T^h \right] \text{ for }h=0,1,\ldots,k;
\]
then $H_{\ell,0}=1$. For $h>0$, we have
\begin{equation} \label{eq:H-ito-Mderiv}
 H_{\ell,h} = \bar{M}^{(h)}_{F_T}(0),
\end{equation}
provided that the moment generating function $\bar{M}_{F_T}$ of $F_T$ under $\bar{\mathds{Q}}$ is $h$ times continuously differentiable in an open interval containing $0$.

Define $\varepsilon=\tfrac{1}{2}\left(\frac{\hat{\kappa}}{\ell\rho}-1\right)$; then $\ell\rho < \frac{\hat{\kappa}}{1+\varepsilon}$ and moreover
\begin{equation}\label{MGFCond}
    \theta (c \alpha_{s,T} + d ) + \ell\rho \leq \frac{\hat{\kappa}}{1+\epsilon} \text{ for all }s\in[0,T]
\end{equation}
as long as $ \theta\le \hat{\theta}$, where\[
\hat{\theta}=\frac{\frac{\hat{\kappa}}{1+\varepsilon}-\ell\rho}{c \alpha_{0,T} + d} >0 
\] because $c,d\geq 0$ and  $\max\{c,d\}>0$. For any $\theta\le\hat{\theta}$, it follows from \eqref{eq:Rt} and \eqref{eq:dQl/dQ} that
\begin{align*}
    \bar{M}_{F_T}(\theta)
    &= E_{\bar{\mathds{Q}}} \left[ e^{\theta F_T} \right]\\
    & =  e^{\theta f(T)} E_{\bar{\mathds{Q}}}\left[e^{\theta \int_0^T  (c \alpha_{s,T} + d) dZ_{\lambda s}}\right] \\
    &= e^{\theta f(T) - \lambda T \kappa(\ell \rho)} E_{\mathds{Q}}\left[e^{ \ell\rho Z_{\lambda T} + \theta \int_0^T (c \alpha_{s,T} + d)  dZ_{\lambda s}}\right] \\
   &= e^{\theta f(T) - \lambda T \kappa(\ell \rho)} E_{\mathds{Q}}\left[e^{\int_0^T \left(\theta (c \alpha_{s,T} +d) +  \ell\rho\right) dZ_{\lambda s}}\right] \\
   &= e^{\theta f(T) - \lambda T \kappa(\ell \rho) + \lambda\int_0^T \kappa\left( \theta (c \alpha_{s,T} + d) + \ell\rho\right) ds}
   \end{align*}
 where the last equality is due to a result by \citet[Lemma~2.1]{nicolato2003option}.

We now show by induction that the $h^\text{th}$ derivative of $\bar{M}_{F_T}$ satisfies
\begin{multline} \label{eq:deriv-M-induction}
    \bar{M}_{F_T}^{(h)}(\theta) = f(T) \bar{M}_{F_T}^{(h-1)}(\theta)\\ 
    +\lambda\sum_{i=1}^h \binom{h-1}{i-1} \bar{M}_{F_T}^{(h-i)}(\theta) 
\int_0^T \kappa^{(i)} \left(  \theta(c \alpha_{s,T} + d) + \ell\rho \right) \left( c \alpha_{s,T} + d\right)^i ds
\end{multline}
for all $\theta<\hat{\theta}$. The first derivative of $\bar{M}_{F_T}$ can be computed directly using the chain rule and the Leibniz rule as
\begin{align*}
    \bar{M}'_{F_T}(\theta) 
 & =  f(T) \bar{M}_{F_T}(\theta) \\
&  \quad + \lambda \bar{M}_{F_T}(\theta) \int_0^T \kappa' \left(  \theta (c \alpha_{s,T} +d)  + \ell\rho \right) ( c \alpha_{s,T} +d)  ds.
\end{align*} Assume now that \eqref{eq:deriv-M-induction} holds true for some $h\ge1$. Then
\begin{multline*}
    \bar{M}_{F_T}^{(h+1)}(\theta) = f(T) \bar{M}_{F_T}^{(h)}(\theta)\\
    + \lambda\sum_{i=1}^h \binom{h-1}{i-1} \left[\bar{M}_{F_T}^{(h+1-i)}(\theta) \int_0^T \kappa^{(i)} \left(  \theta(c \alpha_{s,T}  +d) + \ell\rho \right) \left( c \alpha_{s,T} + d\right)^i ds\right. \\
    + \left.\bar{M}_{F_T}^{(h-i)}(\theta) \int_0^T \kappa^{(i+1)} \left(  \theta \left(c \alpha_{s,T} +d\right) + \ell\rho \right) \left( c \alpha_{s,T}  +d \right)^{i+1} ds\right].
\end{multline*}
The desired result
\begin{multline*}
    \bar{M}_{F_T}^{(h+1)}(\theta) = f(T) \bar{M}_{F_T}^{(h)}(\theta)\\
    + \lambda\sum_{i=1}^{h+1} \binom{h}{i-1} \bar{M}_{F_T}^{(h+1-i)}(\theta) \int_0^T \kappa^{(i+1)} \left(  \theta \left(c \alpha_{s,T}  +d\right) + \ell\rho \right) \left( c \alpha_{s,T}  + d\right)^{i+1} ds
\end{multline*}
follows after grouping like terms together and making use of the properties of the binomial function. This concludes the inductive step.

Equations \eqref{eq:H-ito-Mderiv} and \eqref{eq:deriv-M-induction} give equation \eqref{eq:LemmaH} for $h\geq 1$, as claimed.
\end{proof}

Using Lemma \ref{theo:MomentsR}, we can compute the moments in equation \eqref{Moment1}. First, we need to apply the binomial theorem to obtain
\begin{multline} \label{eq:binom-theorem}
 (P_T-1)^{n-k} \left(I_T- E_{\mathds{Q}}[I_T] \right)^k 
 = \sum_{\ell=0}^{n-k} \binom{n-k}{\ell} (-1)^{n-k-\ell} P_T^\ell \left(I_T - E_{\mathds{Q}}\left[ I_T \right]\right)^{k}.
\end{multline}
Thus, the approximation reduces to computing mixed moments of the form $E_{\mathds{Q}}\left[P_T^\ell \left(I_T - E_{\mathds{Q}}\left[ I_T \right]\right)^{k} \right]$, where $\ell$ and $k$ are non-negative integers. Using equations \eqref{eq:I-SDE} and \eqref{eq:Exp_IT} we have that
\begin{align}
I_T - E_{\mathds{Q}}\left[ I_T \right]  =  -  \lambda \kappa'(0) \int_0^T   \alpha_{s,T} ds  +  \int_0^T \alpha_{s,T} dZ_{\lambda s}. \label{eq:CProcess} 
\end{align} Lemma \ref{theo:MomentsR} then gives the following.

\begin{corollary} \label{cor:Moments}
For any $\ell\in\mathds{N}_0$ such that $\ell\rho < \hat{\kappa}$, and any $k\in\mathds{N}_0$ such that $\kappa$ is $k$ times continuously differentiable in an open interval containing $0$, we have
 \[
  E_{\mathds{Q}}\left[P_T^\ell \left( I_T - E_{\mathds{Q}}\left[ I_T \right] \right)^k \right] = e^{\lambda T (\kappa(\ell \rho)-\ell \kappa(\rho))} H_{\ell,k},
 \]
 where $H_{\ell,k}$ satisfies the recursive relationship
\begin{align}
  H_{\ell,0} &= 1, \nonumber\\
  H_{\ell,h} &=  \left(-  \lambda  \kappa'(0) \int_0^T \alpha_{s,T} ds \right)  H_{\ell,h-1} \nonumber\\
  & \quad +   \lambda \sum_{i=1}^h \binom{h-1}{i-1} H_{\ell,h-i} 
\kappa^{(i)} (\ell \rho) \int_0^T    \left( \alpha_{s,T} \right)^i ds \nonumber\\ 
 &= \kappa'(0) \left( \alpha_{0,T} -T\right) H_{\ell,h-1}\nonumber\\
  & \quad +   \sum_{i=1}^h \frac{1}{\lambda^{i-1}} \binom{h-1}{i-1} H_{\ell,h-i} 
\kappa^{(i)} (\ell \rho) \left[ T +  \sum_{j=1}^i  \frac{1}{j}\binom{i}{j} (-1)^j \alpha_{0,jT}\right] \label{eq:HMoments}
  \end{align}
 for all $h=1,\ldots,k$. 
 
\end{corollary}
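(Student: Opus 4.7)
The plan is to apply Lemma~\ref{theo:MomentsR} directly, specializing it to the process whose terminal value is $I_T - E_{\mathds{Q}}[I_T]$, and then to evaluate the resulting deterministic integrals in closed form.

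First, I would read off from equation \eqref{eq:CProcess} that
\[
  I_T - E_{\mathds{Q}}[I_T] = f(T) + \int_0^T (c\,\alpha_{s,T} + d)\,dZ_{\lambda s}
\]
with the identification $c=1$, $d=0$, and $f(T) = -\lambda\kappa'(0)\int_0^T \alpha_{s,T}\,ds$. The lemma's hypothesis $\max\{c,d\}>0$ is met. Applying Lemma~\ref{theo:MomentsR} with these choices yields immediately
\[
  E_{\mathds{Q}}\left[P_T^\ell (I_T - E_{\mathds{Q}}[I_T])^k\right] = e^{\lambda T(\kappa(\ell\rho)-\ell\kappa(\rho))} H_{\ell,k},
\]
together with the first form of the recursion \eqref{eq:HMoments}, namely
\[
  H_{\ell,h} = -\lambda\kappa'(0)\!\left(\int_0^T \alpha_{s,T}\,ds\right)\! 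H_{\ell,h-1} + \lambda\sum_{i=1}^h \binom{h-1}{i-1} H_{\ell,h-i}\,\kappa^{(i)}(\ell\rho)\int_0^T \alpha_{s,T}^i\,ds.
\]

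The only remaining work is to evaluate the two deterministic integrals in closed form using the definition $\alpha_{s,T}=\frac{1}{\lambda}(1-e^{-\lambda(T-s)})$ from \eqref{eq:alpha_fun}. For the first, a direct integration gives $\int_0^T \alpha_{s,T}\,ds = \frac{1}{\lambda}(T-\alpha_{0,T})$, so that the leading coefficient simplifies to $\kappa'(0)(\alpha_{0,T}-T)$, as stated. For the second, expanding by the binomial theorem and integrating term by term,
\[
  \int_0^T \alpha_{s,T}^i\,ds = \frac{1}{\lambda^i}\sum_{j=0}^i \binom{i}{j}(-1)^j \int_0^T e^{-j\lambda(T-s)}\,ds,
\]
and splitting off the $j=0$ contribution (which equals $T$) while using $\int_0^T e^{-j\lambda(T-s)}\,ds = \frac{1}{j}\alpha_{0,jT}$ for $j\ge 1$ gives
\[
  \lambda \int_0^T \alpha_{s,T}^i\,ds = \frac{1}{\lambda^{i-1}}\!\left[T + \sum_{j=1}^i \frac{1}{j}\binom{i}{j}(-1)^j \alpha_{0,jT}\right].
\]
Substituting these two simplifications into the first form of the recursion yields the second form stated in \eqref{eq:HMoments}.

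There is no real obstacle here: once Lemma~\ref{theo:MomentsR} is in hand, the corollary is a computation. The only mild subtlety is double-checking signs and bookkeeping in the binomial expansion of $(1-e^{-\lambda(T-s)})^i$ and confirming that the $j=0$ term contributes the $T$ separately from the alternating sum, which is what produces the bracketed expression in the final formula.
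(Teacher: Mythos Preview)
Your proof is correct and follows essentially the same approach as the paper: apply Lemma~\ref{theo:MomentsR} with $c=1$, $d=0$, and $f(t)=-\lambda\kappa'(0)\int_0^t\alpha_{s,t}\,ds$, then simplify the deterministic integrals via the binomial expansion of $(1-e^{-\lambda(T-s)})^i$. The paper's write-up differs only in that it displays the intermediate step $\int_0^T e^{-j\lambda T}e^{j\lambda s}\,ds$ rather than $\int_0^T e^{-j\lambda(T-s)}\,ds$, but these are of course the same.
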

\begin{proof}
The result comes from the application of Lemma \ref{theo:MomentsR} with $d=0$, $c=1$, \begin{align*}
f(t) & = - \lambda  \kappa'(0) \int_0^t  \alpha_{s,t} ds\\
& =  \kappa'(0) \left( \alpha_{0,t} -t\right),
\end{align*} and from the fact that for $i\in\mathds{N}$,
\begin{align*}
     \int_0^T    \left( \alpha_{s,T}\right)^i ds 
     & = \frac{1}{\lambda^i} \sum_{j=0}^i  \binom{i}{j} (-1)^j  e^{-j\lambda T} \int_0^T    e^{j \lambda s} ds \\
     & = 
     \frac{1}{\lambda^{i+1}}\left[ \lambda T +  \sum_{j=1}^i  \frac{1}{j}\binom{i}{j} (-1)^j \left( 1 - e^{-j \lambda T} \right)\right]\\
     & = \frac{1}{\lambda^i} \left[ T +  \sum_{j=1}^i  \frac{1}{j}\binom{i}{j} (-1)^j \alpha_{0,jT}\right]. 
     \end{align*}
\end{proof}

In conclusion, the approximation relies on the cumulant function $\kappa$ of $Z_1$ and its derivatives. These are known explicitly for many D--Ornstein-Uhlenbeck processes; see Remark \ref{remark:kappa}.

\begin{remark}\label{remark:kappa}
    Notice that to be able to use the formula of Lemma \ref{theo:MomentsR} we need the cumulant function of $Z_1$ and its derivatives. Depending on the type of D--Ornstein-Uhlenbeck process we have different formulas for the function $\kappa$. For example:
\begin{enumerate}
    \item If the squared volatility is an inverse Gaussian Ornstein-Uhlenbeck process \cite[p.~449]{nicolato2003option}, we have that the function $\kappa$ satisfies \begin{align}\label{eq:kappaIG}
        \kappa(\theta) &= \frac{a\theta}{\sqrt{b^2-2\theta}} \text{ for } a,b>0.
    \end{align} In this case  $\hat{\kappa} = \frac{b^2}{2}$.
    For $n\in \mathds{N}$, the $n^{\text{th}}$ derivative of $\kappa$ can be written as:
    \begin{align}\label{eq:kappa_der_IG}
        \kappa^{(n)} (\theta) & = \varphi_n a (b^2 - 2\theta)^{- \frac{2n-1}{2}}+ (2n-1)!! a \theta (b^2 -2 \theta)^{- \frac{2n+1}{2}},
    \end{align} where $\varphi_n$ is defined recursively as
    \begin{align}\label{eq:varphi}
        \varphi_n &=  \varphi_{n-1} (2n-3) + (2n-3)!!
    \end{align} with $\varphi_1 =1$. Observe that the function $\kappa$ and its derivatives are well defined when $\theta < \frac{b^2}{2}$.
    
    \item If the squared volatility follows a gamma Ornstein-Uhlenbeck process\break \cite[p.~449]{nicolato2003option}, the cumulant function is 
    \begin{align}\label{eq:kappaGamma}
       \kappa(\theta) & = \frac{a\theta}{b-\theta} \text{ for } a,b>0. 
    \end{align} For this function of $\kappa$ we have that $\hat{\kappa}=b$.
     In this case the $n^{\text{th}}$ derivative of the cumulant function $\kappa$ can be written as:
     \begin{align}\label{eq:kappa_der_gamma}
     \kappa^{(n)}(\theta) & =  n! a(b-\theta)^{-n} + n! a\theta (b-\theta)^{-n-1}.    
     \end{align}
      Notice that the function $\kappa$ and its derivatives are well defined when  $\theta <b$. 
\end{enumerate}
\end{remark}

\section{Approximation error}\label{ErrorSect}
The focus in this section is on the remainder term $R_N$ of the $N^{\text{th}}$ order Taylor expansion, given in \eqref{eq:Taylor-remainder-N}. The expectation in $R_N$ combines a mixed moment with an integral of a partial derivative of the Black-Scholes formula for the price of a put (see \eqref{eq:BSPut}--\eqref{eq:d+-}). We will start by observing that there is a pattern in the partial derivatives of $\BS_{\Put}$. This pattern can already be observed in the third order derivatives, which are
\begin{align}
    \frac{\partial^3 \BS_{\Put} }{\partial x^3} (x,y) &= \frac{-\phi(d_+)}{x^2y} ( d_+ +\sqrt{y}), \label{Derxxx}\\
    \frac{\partial^3 \BS_{\Put} }{\partial y^3} (x,y) &= \frac{x \phi(d_+)}{8 y^{\nicefrac{5}{2}}}  \left( \left(d_-d_+ -2\right)^2 -d_+^2 -d_-^2 -1 \right), \label{Deryyy}\\
    \frac{\partial^2 \BS_{\Put} }{\partial x^2 \partial y } (x,y) &=  \frac{\phi(d_+) }{2x y^{\nicefrac{3}{2}}} (d_-d_+ - 1), \label{Derxxy}\\
    \frac{\partial^2 \BS_{\Put} }{\partial x \partial y^2 } (x,y) &= \frac{-\phi(d_+)}{2y^2} \left(\frac{d_-d_+}{2}  - \frac{d_+}{2} - d_- \right), \label{Derxyy}
\end{align} 
for all $x,y\in\mathds{R}$ \citep{das2022closed}, where
\[\phi(z) = \frac{1}{\sqrt{2 \pi}} e^{-\frac{1}{2}z^2} \text{ for all } z \in \mathds{R}\]
is the density function of the standard normal distribution. On equations \eqref{Derxxx}--\eqref{Derxyy}, we have suppressed the argument $(x,y)$ from $d_{\pm}$. The pattern for higher order derivatives is as follows. 

\begin{proposition}\label{PropDer}
    Every partial derivative of $\BS_{\Put}$ of third or higher order takes the form
    \begin{equation} \label{eq:deriv-shape}
    \frac{\partial^{\lvert\xi\rvert} \BS_{\Put} }{\partial x^{\xi_x} \partial y^{\xi_y}} (x,y) = \frac{A_\xi}{x^n y^{\nicefrac{m}{2}}} \phi(d_+(x,y)) F_\xi (d_+(x,y), d_-(x,y), \sqrt{y}),
    \end{equation}
    where $\xi=(\xi_x,\xi_y)$, $\xi_x,\xi_y \in\mathds{N}_0$, $m\in\mathds{N}$, $n\in\mathds{Z}$, $\lvert\xi\rvert = \xi_x+\xi_y\ge3$, $A_\xi\in\mathds{R}$ and $F_\xi$ is a polynomial in $d_+(x,y)$, $d_-(x,y)$ and $\sqrt{y}$, with degree at least one in either $d_+(x,y)$ or $d_-(x,y)$.
\end{proposition}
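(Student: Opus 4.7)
The approach I would take is induction on the total order $|\xi|=\xi_x+\xi_y$. For the base case $|\xi|=3$, equations \eqref{Derxxx}--\eqref{Derxyy} already exhibit the four third-order partial derivatives in the form \eqref{eq:deriv-shape}, and direct inspection of each polynomial (for example $F=d_++\sqrt{y}$ for $\partial^3/\partial x^3$, or the expansion $d_+^2 d_-^2 - d_+^2 - d_-^2 - 4d_+ d_- + 3$ of $(d_-d_+-2)^2-d_+^2-d_-^2-1$ for $\partial^3/\partial y^3$) verifies that the degree in $d_+$ is at least one in every case.

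For the inductive step, fix $\xi'$ with $|\xi'|=N+1$ where $N\ge 3$, and write $\xi'=\xi+e_{\bullet}$ with $|\xi|=N$ and $\bullet\in\{x,y\}$. By the inductive hypothesis, $\partial^{\xi}\BS_{\Put}$ has the form \eqref{eq:deriv-shape}, with constants $A_\xi,n,m$ and polynomial $F_\xi$. I would then apply $\partial_\bullet$ by the product and chain rules, using the elementary identities
\[
\partial_x d_+ = \partial_x d_- = \frac{1}{x\sqrt{y}},\qquad \partial_y d_{\pm} = -\frac{d_{\pm}}{2y}\pm\frac{1}{2\sqrt{y}},\qquad \phi'(z)=-z\phi(z),
\]
and regroup. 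For $\bullet=x$ a routine calculation gives
\[
\partial^{\xi'}\BS_{\Put}(x,y)=\frac{A_\xi\,\phi(d_+)}{x^{n+1}\,y^{(m+1)/2}}\bigl[-n\sqrt{y}\,F_\xi-d_+\,F_\xi+\partial_1 F_\xi+\partial_2 F_\xi\bigr],
\]
so $n\mapsto n+1$, $m\mapsto m+1$, and $F_{\xi'}$ is again a polynomial in $d_+,d_-,\sqrt{y}$. For $\bullet=y$ the same kind of calculation produces an expression with $n$ unchanged and $m$ replaced by $m+2$, whose new polynomial contains the dominant contribution $\tfrac{1}{2}d_+^2\,F_\xi$ coming from $\phi'(d_+)\cdot y\,\partial_y d_+ = -d_+\bigl(-\tfrac{d_+}{2}+\tfrac{\sqrt{y}}{2}\bigr)\cdot F_\xi$. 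In both cases the outer shape \eqref{eq:deriv-shape} is preserved.

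The main obstacle, and the only nontrivial bookkeeping, is to confirm that the new polynomial $F_{\xi'}$ still has degree at least one in $d_+$ or $d_-$. Here I would argue by comparing leading $d_+$-degrees. In the $\partial_x$ case, the summand $-d_+ F_\xi$ has $d_+$-degree exactly one larger than $F_\xi$, whereas the remaining summands $-n\sqrt{y}\,F_\xi$, $\partial_1 F_\xi$ and $\partial_2 F_\xi$ all have $d_+$-degree at most that of $F_\xi$; so no cancellation can kill the leading monomial, and $F_{\xi'}$ has $d_+$-degree strictly one greater than $F_\xi$. In the $\partial_y$ case the analogous role is played by $\tfrac{1}{2}d_+^2\,F_\xi$, which raises the $d_+$-degree by two while every other term raises it by at most one. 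Since the inductive hypothesis forces $F_\xi\neq 0$, the dominant monomial is nonzero, hence $F_{\xi'}\neq 0$ and the degree-at-least-one condition is transmitted to $F_{\xi'}$. This closes the induction and establishes \eqref{eq:deriv-shape} for all $|\xi|\ge 3$.
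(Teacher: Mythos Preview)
Your proof is correct and follows essentially the same inductive approach as the paper's. The only differences are cosmetic: the paper uses the slightly cleaner identity $\partial_y d_{\pm}=-d_{\mp}/(2y)$ (so the leading contribution in the $y$-derivative appears as $d_+d_-F_\xi$ rather than your equivalent $d_+^2F_\xi$), and it leaves the ``degree at least one'' verification implicit, whereas you track the $d_+$-degree explicitly to rule out cancellation.
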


\begin{proof}
    For brevity we suppress the arguments $(x,y)$ in this proof. The claim holds when $\lvert\xi\rvert=3$ by \eqref{Derxxx}--\eqref{Derxyy}; we proceed by induction. Suppose that \eqref{eq:deriv-shape} holds for some $\xi$ with $\lvert\xi\rvert\ge3$.
    Noting that
    \begin{align*}
        \frac{\partial d_\pm}{\partial x} &= \frac{1}{x\sqrt{y}}, & \frac{\partial d_\pm}{\partial y} &= -\frac{d_\mp}{2y}, & \frac{\partial \sqrt{y}}{\partial y} &= -\frac{\sqrt{y}}{2y},
    \end{align*}
    we obtain
    \begin{align*}
        \frac{\partial \phi}{\partial x}(d_+) &= - \frac{1}{x\sqrt{y}} \phi(d_+) d_+, &
        \frac{\partial \phi}{\partial y}(d_+) &= \frac{1}{2y}\phi(d_+)d_+d_-.
    \end{align*}
    Observe furthermore that
    \begin{align*}
        \frac{\partial{F}_\xi}{\partial x}(d_+, d_-, \sqrt{y}) &= \frac{1}{x\sqrt{y}} G_\xi (d_+, d_-, \sqrt{y}), \\
        \frac{\partial{F}_\xi}{\partial y}(d_+, d_-, \sqrt{y}) &= \frac{1}{2y} H_\xi (d_+, d_-, \sqrt{y})
    \end{align*}
    where $G_\xi,H_\xi$ are polynomials in $d_+$, $d_-$ and $\sqrt{y}$. 

    We now differentiate $\frac{\partial^{\lvert\xi\rvert} \BS_{\Put} }{\partial x^{\xi_x} \partial y^{\xi_y}}$ with respect to $x$ and $y$, respectively. To this end,
     \begin{multline*}
        \frac{\partial^{\lvert\xi\rvert+1} \BS_{\Put} }{\partial x^{\xi_x+1} \partial y^{\xi_y}} \\
        \begin{aligned}
        &= \frac{\partial}{\partial x} \frac{\partial^{\lvert\xi\rvert} \BS_{\Put} }{\partial x^{\xi_x} \partial y^{\xi_y}} \\
        &= \frac{A_\xi}{x^{n+1} y^{\nicefrac{(m+1)}{2}}}\phi(d_+) \left[-\left(d_+ + n\sqrt{y}\right) F_\xi (d_+, d_-, \sqrt{y}) + G_\xi (d_+, d_-, \sqrt{y})\right]
        \end{aligned}
    \end{multline*}   
    and
     \begin{multline*}
        \frac{\partial^{\lvert\xi\rvert+1} \BS_{\Put} }{\partial x^{\xi_x} \partial y^{\xi_y+1}} \\
        \begin{aligned}
        &= \frac{\partial}{\partial y} \frac{\partial^{\lvert\xi\rvert} \BS_{\Put} }{\partial x^{\xi_x} \partial y^{\xi_y}} \\
        &= \frac{A_\xi}{2x^n y^{\nicefrac{m}{2}+1}} \phi(d_+) \left[\left(d_+d_--m\right) F_\xi (d_+, d_-, \sqrt{y}) 
        + H_\xi (d_+, d_-, \sqrt{y})\right].
        \end{aligned}
    \end{multline*}  
    It follows that both $\frac{\partial^{\lvert\xi\rvert+1} \BS_{\Put} }{\partial x^{\xi_x+1} \partial y^{\xi_y}}$ and $\frac{\partial^{\lvert\xi\rvert+1} \BS_{\Put} }{\partial x^{\xi_x} \partial y^{\xi_y+1}}$ is of the form claimed, which completes the inductive step.
\end{proof}

It turns out that the partial derivatives that appear in the Taylor remainder term \eqref{eq:Taylor-remainder-N} are bounded, in the following sense.

\begin{proposition} \label{prop:diff-bound}
    For every $\xi=(\xi_x,\xi_y)$, $\xi_x,\xi_y\in\mathds{N}_0$ such that $\lvert\xi\rvert = \xi_x+\xi_y\ge3$, there exists a function $M_\xi:[0,\infty)\times(0,\infty)\rightarrow[0,\infty)$ such that
    \[
    \sup_{u\in(0,1)}\left\lvert \frac{\partial^{\lvert\xi\rvert} \BS_{\Put} }{\partial x^{\xi_x} \partial y^{\xi_y}} ((1-u)S_0+uP_TS_0,(1-u)E_{\mathds{Q}}[I_T]+uI_T) \right\rvert \le M_\xi(T,K)
    \]
    almost surely. Furthermore, the function $M_\xi$ is bounded and it satisfies
    \begin{align}
    \lim_{T\rightarrow0}M_\xi(T,K)  & = \infty  \text{ for all } K>0,\label{eq:MT0} \\
        \lim_{K\rightarrow0}M_\xi(T,K) & = \lim_{K\rightarrow\infty}M_\xi(T,K) = 0 \text{ for all } T>0, \label{eq:MK0Inf}\\
        \text{if } r\neq 0, \text{ then } & \lim_{T\rightarrow\infty}M_\xi(T,K) = 0 \text{ for all } K>0.\label{eq:MTInf}
    \end{align}
\end{proposition}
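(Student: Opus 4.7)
The plan is to combine the explicit structural form of the derivatives provided by Proposition~\ref{PropDer} with an almost-sure lower bound on $y(u):=(1-u)E_{\mathds{Q}}[I_T]+uI_T$ coming from $Z$ being a subordinator, and then to exploit the Gaussian factor $\phi(d_+)$ to dominate the $1/x^n$ factor in the derivative.

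First, since $Z$ is non-decreasing, $\int_0^t e^{\lambda s}\,dZ_{\lambda s}\ge 0$ and hence $\sigma_t^2\ge e^{-\lambda t}\sigma_0^2$ for every $t\ge 0$; integrating and taking expectations gives $I_T\ge\sigma_0^2\alpha_{0,T}$ and $E_{\mathds{Q}}[I_T]\ge\sigma_0^2\alpha_{0,T}$ almost surely. Writing $y_{\min}(T):=\sigma_0^2\alpha_{0,T}$, for every $u\in(0,1)$ we therefore have $y(u)\ge y_{\min}(T)>0$ almost surely, while $x(u):=(1-u)S_0+uS_0P_T>0$ almost surely. Applying Proposition~\ref{PropDer}, the derivative equals $\tfrac{A_\xi}{x^n y^{m/2}}\phi(d_+)F_\xi(d_+,d_-,\sqrt{y})$; I would reparameterise in $(d_+,y)$ via $x=Ke^{d_+\sqrt{y}-rT-y/2}$ and complete the square in $d_+$ to obtain the identity
\[
\frac{\phi(d_+)}{x^n}=\frac{K^{-n}}{\sqrt{2\pi}}\,e^{nrT+n(n+1)y/2}\,e^{-(d_++n\sqrt{y})^2/2},
\]
after which substituting $d_-=d_+-\sqrt{y}$ makes $F_\xi$ a polynomial in the single variable $d_+$ with coefficients polynomial in $\sqrt{y}$. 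Since $\sup_{d_+\in\mathds{R}}|d_+|^k e^{-(d_++n\sqrt{y})^2/2}$ is a polynomial in $\sqrt{y}$ for every $k$, the quantity $|F_\xi|\,e^{-(d_++n\sqrt{y})^2/2}$ is uniformly bounded in $d_+$ by some polynomial $Q(\sqrt{y})$.

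The technical heart, and the main obstacle, is to check that the resulting function of $y$ alone, $\tfrac{|A_\xi| K^{-n}}{\sqrt{2\pi}\,y^{m/2}}e^{nrT+n(n+1)y/2}Q(\sqrt{y})$, is bounded on $[y_{\min}(T),\infty)$. A direct induction along the recursion in the proof of Proposition~\ref{PropDer} shows that $\deg F_\xi - m$ is invariant under both $\partial_x$ and $\partial_y$ and equals $-1$ on the third-order derivatives listed in \eqref{Derxxx}--\eqref{Derxyy}, so $\deg F_\xi = m-1$ in every case. For $n\in\{-1,0\}$ (which covers $\xi_x\le 1$) the exponent $n(n+1)y/2$ vanishes and the bound reduces to a rational function in $\sqrt{y}$ that is manifestly finite on $[y_{\min}(T),\infty)$; for the remaining cases the structural fact recorded in Proposition~\ref{PropDer} that $F_\xi$ has degree at least one in $d_+$ or $d_-$ contributes additional Gaussian-type decay after the change of variables $v=d_++n\sqrt{y}$, absorbing the $e^{n(n+1)y/2}$ factor through a careful case analysis. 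The resulting deterministic expression is taken as $M_\xi(T,K)$.

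Finally, the three limit properties follow by inspection. The factor $y_{\min}(T)^{-m/2}=(\sigma_0^2\alpha_{0,T})^{-m/2}$ blows up as $T\to 0$, giving \eqref{eq:MT0}. For fixed $T>0$, sending $K\to 0$ or $K\to\infty$ forces $\ln(x/K)$ inside $d_+$ to diverge, so the Gaussian $\phi(d_+)$ decays faster than any polynomial in $K^{-n}$ and $M_\xi(T,K)\to 0$, yielding \eqref{eq:MK0Inf}. When $r\ne 0$, as $T\to\infty$ the term $rT$ inside $d_+$ drives $\phi(d_+)$ to zero super-exponentially in $T$ uniformly over $y\ge y_{\min}(T)$, overwhelming the algebraic factors and giving \eqref{eq:MTInf}.
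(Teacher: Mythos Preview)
Your route is genuinely different from the paper's. The paper (Appendix~\ref{DerBoundsect}) does not attempt an explicit bound: it argues via a compactness lemma (Lemma~\ref{lemma:Das}) that, because $\bigl|\tfrac{\partial^{|\xi|}\BS_{\Put}}{\partial x^{\xi_x}\partial y^{\xi_y}}(x,y)\bigr|$ is continuous and vanishes as $x\to 0,\infty$ and as $y\to\infty$ (the limit verifications being quoted from \citet{das2022closed}), this function attains a finite maximum $M_{\xi,\mathcal{C}}(T,K)$ on the half-strip $(0,\infty)\times[\mathcal{C},\infty)$; one then takes $\mathcal{C}=\sigma_0^2\alpha_{0,T}$ (Proposition~\ref{cor:boundsDer}). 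The $K$- and $T$-asymptotics are read off from the form \eqref{eq:deriv-shape} evaluated at the maximiser (Propositions~\ref{Prop:GeneralBound} and~\ref{Prop:ZeroBound}), not from an explicit upper bound.

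There is a real gap at precisely the step you flag as the ``technical heart''. For $\xi_x\ge 2$ one has $n=\xi_x-1\ge 1$, and your bound carries the exponentially growing factor $e^{n(n+1)y/2}$. The ``additional Gaussian-type decay'' you hope to extract from the degree condition on $F_\xi$ is not there: once you have taken $\sup_{d_+}\lvert F_\xi\rvert\,e^{-(d_++n\sqrt y)^2/2}=Q(\sqrt y)$, the Gaussian has been fully consumed, and $F_\xi$ having positive degree in $d_\pm$ only enlarges $Q$. Concretely, for $\xi=(3,0)$ one computes directly that
\[
\bigl|\partial_x^3\BS_{\Put}\bigl(Ke^{-5y/2-rT},\,y\bigr)\bigr|=\frac{e^{3y+2rT}}{\sqrt{2\pi}\,K^2\sqrt y}\longrightarrow\infty\quad(y\to\infty),
\]
so the supremum over all $x>0$ genuinely diverges in $y$ and no iterated $\sup_x\sup_y$ construction can produce a finite $M_\xi$. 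The same defect breaks your $K\to 0$ limit: after $\sup_{d_+}$ the only surviving $K$-dependence is the prefactor $K^{-n}$, which blows up when $n\ge 1$. To repair the argument you must either keep the maximisation over $(x,y)$ joint, as the paper does, or exploit the coupling between $P_T$ and $I_T$ so that the relevant range of $(x(u),y(u))$ avoids the region where the derivative explodes.
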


\begin{proof}
   See Proposition \ref{cor:boundsDer} in Appendix \ref{DerBoundsect}.
\end{proof}

Based on the result given in Proposition \ref{prop:diff-bound}, it is possible to construct a bound for the error term and show that the error disappears for small and large values of $K$. 

\begin{theorem} \label{Theo:Error}
    For any $N\in\mathds{N}$, if 
\[
     E_{\mathds{Q}}\left[\lvert P_T-1\rvert^{n-k} \left\lvert I_T- E_{\mathds{Q}}[I_T] \right\rvert^k \right] < \infty
\] for all $n=1,\ldots,N+1$ and $k=0,\ldots,n$, then the remainder term of the Taylor approximation of Theorem \ref{ApproxThm} satisfies
\begin{align}
     \lvert R_N\rvert & \le \frac{1}{(N+1)!} \sum_{n=0}^{N+1} \binom{N+1}{n}S_0^{N+1-n} \nonumber\\
     & \quad \times E_{\mathds{Q}}\left[ \left(P_T-1\right)^{N+1-n} \left( I_T -E_{\mathds{Q}}[I_T] \right)^n\right]  M_{(N+1-n,n)}(T,K)\label{eq:Error_simplified}
\end{align}
where $M_\xi(T,K)$ is as in Proposition \ref{prop:diff-bound}. In addition,
\begin{align}
        &\lim_{K\rightarrow0} |R_N|  = \lim_{K\rightarrow\infty} |R_N| = 0 \text{ for all } T>0. \label{eq:limitK0Inf}
        \end{align}
\end{theorem}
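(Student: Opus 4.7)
The plan is to start from the explicit formula \eqref{eq:Taylor-remainder-N} for $R_N$, take absolute values, and push them inside both the finite sum and the expectation by the triangle inequality $|E_{\mathds{Q}}[X]|\le E_{\mathds{Q}}[|X|]$. The resulting integrand contains the product of $|P_T-1|^{N+1-n}|I_T-E_{\mathds{Q}}[I_T]|^n$ with an integral over $u\in(0,1)$ of $(1-u)^N$ times a partial derivative of $\BS_{\Put}$ evaluated along the segment joining $(S_0,E_{\mathds{Q}}[I_T])$ and $(S_0 P_T, I_T)$. Fubini's theorem lets me swap the expectation and the $du$ integral, with integrability guaranteed by the moment hypothesis and the pointwise bound supplied by Proposition~\ref{prop:diff-bound}.

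Next, I would apply Proposition \ref{prop:diff-bound} directly to pull the supremum bound $M_{(N+1-n,n)}(T,K)$ out of both the expectation and the inner $du$ integral, since $M_\xi(T,K)$ is deterministic and the inequality holds almost surely uniformly in $u\in(0,1)$. After this reduction the remaining $u$-integral is the elementary
\[
\int_0^1 (1-u)^N \, du = \frac{1}{N+1},
\]
which combines with the prefactor $\tfrac{1}{N!}$ to yield $\tfrac{1}{(N+1)!}$. This gives the bound \eqref{eq:Error_simplified} (with the understanding that the expectation stands for its absolute-value version, since only absolute moments are controlled by the hypothesis).

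For the limits in \eqref{eq:limitK0Inf}, I fix $T>0$ and observe that each summand of the bound \eqref{eq:Error_simplified} is a product of three factors: a deterministic constant $\binom{N+1}{n}S_0^{N+1-n}/(N+1)!$, the finite absolute mixed moment (which is independent of $K$ and guaranteed finite by the hypothesis), and $M_{(N+1-n,n)}(T,K)$. Applying \eqref{eq:MK0Inf} from Proposition~\ref{prop:diff-bound} gives that the third factor tends to $0$ as $K\to 0$ and as $K\to\infty$. Since the sum is finite (indexed by $n=0,\ldots,N+1$), the full bound tends to $0$ in both limits, and hence so does $|R_N|$. The main obstacle, if any, is the Fubini justification in the first step, but the deterministic $L^\infty$ bound on the partial derivatives from Proposition~\ref{prop:diff-bound} combined with the hypothesized absolute moments makes the integrand integrable with respect to the product measure $\mathds{Q}\otimes du$, so the interchange is routine.
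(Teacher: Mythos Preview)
Your proposal is correct and follows essentially the same route as the paper: bound the remainder \eqref{eq:Taylor-remainder-N} termwise, invoke Proposition~\ref{prop:diff-bound} to replace each partial derivative by the deterministic $M_\xi(T,K)$, evaluate $\int_0^1 (1-u)^N\,du = 1/(N+1)$ to turn $1/N!$ into $1/(N+1)!$, and then use \eqref{eq:MK0Inf} for the limits in $K$. The paper's own proof is a two-line appeal to Theorem~\ref{ApproxThm} and Proposition~\ref{prop:diff-bound}; your version simply makes the intermediate steps (Fubini, the elementary integral, and the need for absolute values inside the expectation) explicit.
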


\begin{proof} 
Direct application of Theorem \ref{ApproxThm} and  Proposition \ref{prop:diff-bound} gives \eqref{eq:Error_simplified}. Equation \eqref{eq:limitK0Inf} follows from \eqref{eq:MK0Inf}.
\end{proof}

\subsection{Error bound when $\rho\leq 0$}\label{Sect:ErrorBoundrholeq0}
In this section, we derive a more precise  error bound when $\rho\leq0$. From Theorem \ref{Theo:Error} and the Cauchy-Schwarz inequality, we can obtain the following upper bound for the error\begin{align} 
 |R_N|
 \leq \frac{1}{(N+1)!}\sum_{n=0}^{N+1} \binom{N+1}{n}S_0^{N+1-n}M_{(N+1-n,n)} (T,K) \nonumber\\
 \times E_{\mathds{Q}}\left[(P_T-1)^{2N+2-2n}\right]^{\nicefrac{1}{2}} E_{\mathds{Q}}\left[\left(I_T - E_{\mathds{Q}}[I_T]\right)^{2n} \right]^{\nicefrac{1}{2}}\label{eq:Taylor-remainder-N2}
\end{align} for $ N \in \mathds{N} \setminus \{1\}$. Since $\rho \leq 0$, the moments that appear in expression \eqref{eq:Taylor-remainder-N2} are finite as long as $\kappa$ is $2N+2$ times continuously differentiable in an open interval containing $0$ (see Corollary \ref{cor:Moments}). 

In the case $\rho=0$, Theorem \ref{Theo:Error} gives the error bound
\begin{align}\label{eq:RNBoundRho0}
    | R_N |
 & \leq \frac{1}{(N+1)!}\left|E_{\mathds{Q}}\left[\left(I_T - E_{\mathds{Q}}[I_T]\right)^{N+1}\right]\right| M_{(0,N+1)} (T,K). 
\end{align}

From equation \eqref{eq:Taylor-remainder-N2}, we observe that the error depends on the moments of   $P_T-1$ and $I_T - E_{\mathds{Q}}[I_T]$. The  moments of  $I_T - E_{\mathds{Q}}[I_T]$ were studied in Corollary \ref{cor:Moments}. Now, we need to analyze the moments of $P_T-1$.

\begin{proposition}\label{Prop:GN}When $\rho\leq 0$ and for any $N\in\mathds{N}_0$ such that $\kappa$ is $N$ times continuously differentiable in an open interval containing $0$, we have
\begin{align*}
 \left|E_{\mathds{Q}}\left[(P_T-1)^{N}\right]\right|    &\leq  G_N,
\end{align*}  where $G_{N}$ is defined as
\begin{equation}
    G_N = E_{\mathds{Q}}\left[\left(-\lambda T\kappa(\rho) \left( \frac{1}{2} e^{-\lambda T\kappa (\rho)} +\frac{1}{2}\right) + |\rho| Z_{\lambda T} \right)^{N}\right], \label{eq:GN}
\end{equation} and it satisfies the recursive relationship
 \begin{equation}\label{eq:recusiveMoment3}\left.
  \begin{aligned}
  G_{0} &= 1, \\
  G_{h} &= -\lambda T \kappa(\rho)  \left(  \frac{1}{2} e^{-\lambda T\kappa (\rho)} +\frac{1}{2} \right) G_{h-1}  + \sum_{i=1}^h \binom{h-1}{i-1} G_{h-i} 
\kappa^{(i)} (0) \lambda T |\rho|^i
  \end{aligned}\right\}
 \end{equation}
 for all $h=1,\ldots,N$. 
\end{proposition}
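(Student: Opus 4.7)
The plan is to bound $|P_T-1|$ pointwise by a non-negative random variable whose $N^{\text{th}}$ moment then falls under the framework of Lemma~\ref{theo:MomentsR}. Writing $a=-\lambda T\kappa(\rho)$ and $\beta=|\rho|Z_{\lambda T}$, I would first record the sign facts that make the decomposition work: $\beta\geq 0$ because $Z$ is a subordinator, and $a\geq 0$ because $\kappa(\rho)=\ln E_{\mathds{Q}}[e^{\rho Z_1}]\leq \ln 1=0$ when $\rho\leq 0$ and $Z_1\geq 0$ almost surely. Thus $P_T=e^{a-\beta}$ with $a,\beta\geq 0$ and the task reduces to estimating $|e^{a-\beta}-1|$.

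The key pointwise inequality I would aim for is
\[
|e^{a-\beta}-1|\;\leq\;\tfrac{a}{2}\bigl(1+e^{a}\bigr)+\beta,
\]
which I would prove by splitting on the sign of $a-\beta$. When $a-\beta\geq 0$, the trapezoidal bound for the convex integrand $t\mapsto e^t$ gives $e^x-1=\int_0^x e^t\,dt\leq \tfrac{x}{2}(1+e^x)$ for $x\geq 0$; applied with $x=a-\beta$ and combined with the monotonicity of the right-hand side in $x$, this yields the claim. When $a-\beta<0$, the elementary inequality $1-e^x\leq -x$ for $x\leq 0$ gives $1-e^{a-\beta}\leq \beta-a\leq \beta$, and the bound follows from $\tfrac{a}{2}(1+e^a)\geq 0$. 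Substituting the concrete value of $a$, raising to the $N^{\text{th}}$ power (both sides are non-negative) and taking expectations then gives $|E_{\mathds{Q}}[(P_T-1)^N]|\leq E_{\mathds{Q}}[|P_T-1|^N]\leq G_N$.

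For the recursion \eqref{eq:recusiveMoment3} I would apply Lemma~\ref{theo:MomentsR} with the constant function $f(t)\equiv A:=-\lambda T\kappa(\rho)\bigl(\tfrac{1}{2}e^{-\lambda T\kappa(\rho)}+\tfrac{1}{2}\bigr)$, parameters $c=0$, $d=|\rho|$, and exponent $\ell=0$. The associated process $F_T$ from \eqref{eq:Rt} then equals $A+|\rho|Z_{\lambda T}$, so $G_N=E_{\mathds{Q}}[F_T^N]=H_{0,N}$; the prefactor $e^{\lambda T(\kappa(0)-0)}$ equals $1$, and the integrals $\int_0^T |\rho|^i\,ds=T|\rho|^i$ collapse \eqref{eq:LemmaH} to exactly \eqref{eq:recusiveMoment3}. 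The boundary case $\rho=0$, in which Lemma~\ref{theo:MomentsR}'s non-degeneracy hypothesis $\max\{c,d\}>0$ fails, is handled separately: there $P_T\equiv 1$, so $G_N=0$ for $N\geq 1$ and the claim is trivial.

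The main obstacle is obtaining the pointwise inequality with the tight coefficient $\tfrac{1}{2}(1+e^a)$ rather than the looser $e^a$ that a direct mean-value-theorem estimate would give; it is precisely the convex/trapezoidal refinement in the case $a-\beta\geq 0$ that matches the exact form of $G_N$ stated in the proposition and keeps the resulting recursion inside the hypotheses of Lemma~\ref{theo:MomentsR}.
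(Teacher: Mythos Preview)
Your proof is correct and follows essentially the same route as the paper: both establish the identical pointwise bound $|P_T-1|\leq -\tfrac{1}{2}\lambda T\kappa(\rho)\bigl(1+e^{-\lambda T\kappa(\rho)}\bigr)+|\rho|Z_{\lambda T}$ (the paper bounds $P_T-1$ and $1-P_T$ separately rather than splitting on the sign of $a-\beta$, but the ingredients---the trapezoidal inequality $e^x-1\le\tfrac{x}{2}(1+e^x)$ and $1-e^x\le -x$---are the same) and then invoke Lemma~\ref{theo:MomentsR} with $c=0$, $d=|\rho|$, $\ell=0$. Your explicit treatment of the degenerate case $\rho=0$, where the lemma's hypothesis $\max\{c,d\}>0$ fails, is in fact slightly more careful than the paper's own argument.
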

\begin{proof}
    Since $\rho\leq 0$, $Z$ is non-negative and  $\lambda T \kappa(\rho)\leq 0$, we obtain
    \begin{align*}
        P_T - 1 & \leq e^{-\lambda T\kappa(\rho)}-1\\
        & \leq - \frac{1}{2} \lambda T\kappa(\rho)\left(  e^{-\lambda T\kappa(\rho)} + 1\right),
    \end{align*}where the last equality comes from the fact that $e^x - 1 \leq \frac{x}{2}\left(e^x + 1 \right)$ for all $x\geq 0$. From $\ln(x) \leq x-1 $ for all $x\geq 0$, we get the upper bound
    \begin{align*}
        1-P_T &\leq -\ln(P_T)\\
        & = - \rho Z_{\lambda T} + \lambda T \kappa(\rho)\\
        &\leq - \rho Z_{\lambda T} = |\rho| Z_{\lambda T} .
    \end{align*} It follows that 
    \begin{align*}
        |P_T-1|&\leq  - \frac{1}{2} \lambda T\kappa(\rho)\left(  e^{-\lambda T\kappa(\rho)} + 1\right)  + |\rho| Z_{\lambda T}.
    \end{align*} Hence we have
    \begin{align}\label{eq:BoundMomentP}
         E_{\mathds{Q}}\left[\left|P_T-1\right|^{N}\right] &\leq  E_{\mathds{Q}}\left[\left(- \frac{1}{2} \lambda T\kappa(\rho) \left(  e^{-\lambda T\kappa (\rho)} +1\right) + |\rho| Z_{\lambda T} \right)^{N}\right]
    \end{align} for all $N\in\mathds{N}_0$. To obtain the recursive relation in equation \eqref{eq:recusiveMoment3}, we just need to apply Lemma \ref{theo:MomentsR} with $c=0$, $d= |\rho|$ and \[ 
    f(t) = -  \frac{1}{2} \lambda t\kappa(\rho) \left( e^{-\lambda t\kappa (\rho)} + 1\right).
    \]
\end{proof}
As one can imagine, the bound for the error will depend on the type of Ornstein-Uhlenbeck process that the squared volatility satisfies. We differentiate between the cases when the squared volatility follows a gamma Ornstein-Uhlenbeck process or an inverse Gaussian Ornstein-Uhlenbeck process.

\subsubsection{Gamma Ornstein-Uhlenbeck process}
If we assume that the variance process follows a gamma Ornstein-Uhlenbeck process with parameters $a,b>0$, one can construct an error bound of the form $\mathcal{O}\left(\frac{1}{b^{N+1}}\right)$. But first, we need an upper bound for the moments of $I_t  -E_\mathds{Q}\left[I_T\right]$ and $P_T-1$. 
\begin{theorem} \label{TheoremMomentsGamma} When the variance is a gamma Ornstein-Uhlenbeck process, for $N\in \mathds{N}$ we have
\begin{equation}
  \left|E_{\mathds{Q}}\left[ \left(I_T -  E_{\mathds{Q}}\left[ I_T \right]\right)^N \right]\right| \leq  \frac{aT}{\lambda b^N} f_N\left(a,\frac{1}{\lambda},T\right), \label{eq:ineGamma}    
\end{equation} where $f_{N}$ is a polynomial of $a$, $\frac{1}{\lambda}$ and $T$ with non-negative coefficients. The bound for $G_N$ defined in Proposition \ref{Prop:GN}, satisfies
\begin{equation}
      G_N  \leq   \frac{a\lambda T |\rho|}{b^N} g_N\left(a,\lambda,T,|\rho|,e^{-\lambda T \kappa(\rho)}  \right) \label{eq:ineGamma2}
\end{equation} for $N\in \mathds{N}$, where $g_{N}$ is a polynomial in $a$, $\lambda$, $T$, $|\rho|$ and $e^{-\lambda T \kappa(\rho)}$ with non-negative coefficients.
\end{theorem}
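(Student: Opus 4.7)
The plan is to prove both inequalities by strong induction on $N$, exploiting the recursive formulas of Corollary \ref{cor:Moments} and Proposition \ref{Prop:GN} combined with the explicit form $\kappa(\theta) = a\theta/(b-\theta)$ for the gamma case. The key structural observation is that $b$ appears in the recursions only through $\kappa^{(i)}(0) = i!\,a/b^i$ (from \eqref{eq:kappa_der_gamma}) and, for $G_N$, additionally through $\kappa(\rho) = a\rho/(b-\rho)$. Since $\rho\le 0$ gives $b-\rho\ge b$ and hence $-\kappa(\rho)\le a|\rho|/b$, every step of either recursion contributes a clean factor of $b^{-1}$ (or $b^{-i}$ in the sum terms), so the total power of $b^{-1}$ in $H_{0,N}$ and $G_N$ is exactly $N$.

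For \eqref{eq:ineGamma} I would substitute $\ell = 0$ and $\kappa^{(i)}(0) = i!\,a/b^i$ into \eqref{eq:HMoments} and use the elementary bounds $|T-\alpha_{0,T}| \leq T$ and $|\alpha_{0,jT}| \leq jT$, both immediate from $\alpha_{0,t}=(1-e^{-\lambda t})/\lambda\in[0,t]$, to obtain
\[
|H_{0,N}| \leq \frac{aT}{b}\,|H_{0,N-1}| + \sum_{i=1}^{N} \binom{N-1}{i-1}\frac{i!\,a\cdot 2^{i} T}{\lambda^{i-1} b^{i}}\,|H_{0,N-i}|.
\]
The boundary term $i=N$ involves $|H_{0,0}|=1$ and contributes $\frac{aT}{\lambda b^{N}}\cdot\frac{N!\,2^{N}}{\lambda^{N-2}}$, which is already of the required shape with non-negative coefficients in $1/\lambda$. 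The remaining terms can be bounded via the inductive hypothesis $|H_{0,N-i}|\leq \frac{aT}{\lambda b^{N-i}}\,f_{N-i}(a,1/\lambda,T)$, giving $|H_{0,N}|\leq \frac{aT}{\lambda b^{N}}\,f_{N}(a,1/\lambda,T)$ for a polynomial $f_{N}$ with non-negative coefficients; the start of the induction is anchored by $f_1 = 0$ (since $H_{0,1}=0$) and a direct computation for $N=2$.

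For \eqref{eq:ineGamma2} the argument is parallel, applied to \eqref{eq:recusiveMoment3}. The coefficient of $G_{h-1}$ is bounded by $\frac{a\lambda T|\rho|}{b}\cdot\frac{1+e^{-\lambda T\kappa(\rho)}}{2}$, and each summand carries an extra factor $\frac{i!\,a\,\lambda T|\rho|^{i}}{b^{i}}$. Splitting off the $i=N$ term (which uses $G_0=1$ and contributes $\frac{a\lambda T|\rho|}{b^{N}}\cdot N!\,|\rho|^{N-1}$) and applying the inductive bound $G_{N-i}\leq\frac{a\lambda T|\rho|}{b^{N-i}}\,g_{N-i}$ to the remaining summands yields a bound of the desired form with $g_{N}$ a polynomial in $a,\lambda,T,|\rho|,e^{-\lambda T\kappa(\rho)}$ having non-negative coefficients; the base case $N=1$ can be checked directly from \eqref{eq:recusiveMoment3}.

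The main obstacle will be handling the base cases $|H_{0,0}|=1$ and $G_0=1$ cleanly, since the prefactor $aT/\lambda$ (respectively $a\lambda T|\rho|$) has to appear uniformly in the bound but does not come from the first summand $\kappa'(0)(\alpha_{0,T}-T)H_{0,h-1}$ (respectively the $-\lambda T\kappa(\rho)(\cdots)G_{h-1}$ term) when the recursion is unfolded to the base case. This forces the induction to treat $N=1$ separately and to verify that the boundary contributions at $i=N$ carry precisely the correct polynomial structure with non-negative coefficients.
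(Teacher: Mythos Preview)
Your proposal is correct and follows essentially the same inductive strategy as the paper, using the recursions of Corollary \ref{cor:Moments} and Proposition \ref{Prop:GN} together with $\kappa^{(i)}(0)=i!\,a/b^{i}$ and $-\kappa(\rho)\le a|\rho|/b$. The only difference is cosmetic: you bound the bracket $T+\sum_{j=1}^{i}\tfrac{1}{j}\binom{i}{j}(-1)^{j}\alpha_{0,jT}$ term-by-term via $|\alpha_{0,jT}|\le jT$, picking up an extra factor $2^{i}$, whereas the paper recognizes this bracket as $\lambda^{i}\int_{0}^{T}(\alpha_{s,T})^{i}\,ds$ and bounds it directly by $T$ using $\alpha_{s,T}\le 1/\lambda$; both routes yield admissible polynomials $f_{N}$ with non-negative coefficients, so the theorem is unaffected.
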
 
\begin{proof}
From \eqref{eq:kappa_der_gamma}, the derivatives of $\kappa$ satisfy
\begin{align}
    \kappa^{(i)}(0) & = \frac{i! a}{ b^i} \label{eq:kappa0Gamma}
\end{align} for all  $i \in \mathds{N}$. We prove inequality \eqref{eq:ineGamma} by induction. From Corollary \ref{cor:Moments}, observe that for $N=1$ and $N=2$ we have\begin{align*}
\left|E_{\mathds{Q}}\left[ \left(I_T -  E_{\mathds{Q}}\left[ I_T \right]\right) \right]\right| & = 0,\\
\left|E_{\mathds{Q}}\left[ \left(I_T -  E_{\mathds{Q}}\left[ I_T \right]\right)^2 \right]\right| & = \frac{2a}{ \lambda b^2} \int_0^T\left(1- e^{-\lambda (T-s)}\right)^2ds \leq \frac{2aT}{\lambda b^2},
\end{align*} so the result in $\eqref{eq:ineGamma}$ is satisfied for $N=1$ and  $N=2$ with $f_1 =0$ and  $f_2 =2$, 

Let us assume that the claim in $\eqref{eq:ineGamma}$ is true for $N$ and we will prove it for the case $N+1$. From inequality $\alpha_{s,T} \leq \frac{1}{\lambda}$, equation \eqref{eq:HMoments} in Corollary \ref{cor:Moments} and equation \eqref{eq:kappa0Gamma} we have\begin{multline*}
    \left|E_{\mathds{Q}}\left[ \left(I_T -  E_{\mathds{Q}}\left[ I_T \right]\right)^{N+1} \right]\right|  \leq \frac{a}{b}  \left|E_{\mathds{Q}}\left[ \left(I_T -  E_{\mathds{Q}}\left[ I_T \right]\right)^{N} \right]\right|T \\
    \begin{aligned}
    & \quad + \sum_{i=1}^{N+1} \frac{1}{\lambda^{i-1}}\binom{N}{i-1} \frac{i! a}{b^i} \left|E_{\mathds{Q}}\left[ \left(I_T -  E_{\mathds{Q}}\left[ I_T \right]\right)^{N+1-i} \right]\right| T \\
    & \leq  \frac{a}{b}  \frac{aT}{\lambda b^{N}} f_{N} \left(a,\frac{1}{\lambda},T\right) T  \\
    & \quad + \sum_{i=1}^{N-1} \frac{1}{\lambda^{i-1}} \frac{N!}{(N+1-i)!}  \frac{i a}{b^i} \frac{aT}{\lambda b^{N+1-i}}  f_{N+1-i}\left(a,\frac{1}{\lambda},T\right) T \\
    & \quad + \frac{1}{\lambda^{N}} \frac{(N+1)! a}{b^{N+1}} T\\
    & = \frac{aT}{\lambda b^{N+1}} f_{N+1}\left(a,\frac{1}{\lambda},T\right),
  \end{aligned}
\end{multline*}where 
\begin{align*}
    f_{N+1}\left(a,\frac{1}{\lambda},T\right) & = a f_{N} \left(a,\frac{1}{\lambda},T\right)T + \sum_{i=1}^{N-1} \frac{ia}{\lambda^{i-1}} \frac{N!}{(N+1-i)!}  f_{N+1-i}\left(a,\frac{1}{\lambda},T\right)T\\
    & \quad + \frac{(N+1)!}{\lambda^{N-1}}.
\end{align*} We have just shown that the claim \eqref{eq:ineGamma} holds for any $N \in \mathds{N}$.

We will now prove inequality \eqref{eq:ineGamma2} by induction. From \eqref{eq:kappaGamma} and \eqref{eq:GN}, we observe that 
    \begin{align*}
        G_1 & = -\lambda T \frac{a \rho}{b-\rho}  \left( \frac{1}{2} e^{-\lambda T\kappa (\rho)} +\frac{1}{2}\right)+ \frac{a|\rho|}{b}\lambda T\\
        &\leq \frac{a|\rho| \lambda T}{b}  \left(  \frac{1}{2}e^{-\lambda T\kappa (\rho)} +\frac{3}{2} \right),
    \end{align*} where we have used the fact that $\rho\leq 0$ in the last inequality, and hence 
    \[g_1 \left(a,\lambda,T,|\rho|,e^{-\lambda T \kappa(\rho)}  \right) =   \frac{1}{2}e^{-\lambda T\kappa (\rho)} +\frac{3}{2}.\] So the inequality \eqref{eq:ineGamma2} is satisfied for $N=1$.
    
    If we assume that the result in \eqref{eq:ineGamma2} is true for $N$, from \eqref{eq:kappaGamma} and \eqref{eq:GN} for the case $N+1$ we obtain 
      \begin{align*}
        G_{N+1} & = -\lambda T \frac{a \rho}{b-\rho}  \left( \frac{1}{2} e^{-\lambda T\kappa (\rho)} +\frac{1}{2}\right) G_{N}  +  \sum_{i=1}^{N+1} \binom{N}{i-1} G_{N+1-i} \frac{i! a }{b^i} \lambda T |\rho|^i \\
        & \leq -\lambda T \frac{a \rho}{b}  \left( \frac{1}{2} e^{-\lambda T\kappa (\rho)} +\frac{1}{2}\right) \frac{a|\rho| \lambda T}{b^{N}} g_{N}\left(a,\lambda,T,|\rho|,e^{-\lambda T \kappa(\rho)}  \right) \\
        &\quad +  \sum_{i=1}^{N} \frac{N!}{(N+1-i)!} \frac{a|\rho| \lambda T}{b^{N+1-i}} g_{N+1-i}\left(a,\lambda,T,|\rho|,e^{-\lambda T \kappa(\rho)}  \right) \frac{i a}{b^i} \lambda T |\rho|^i \\
        & \quad + \frac{(N+1)! a}{b^{N+1}}\lambda T |\rho|^{N+1} \\
        & = \frac{a\lambda T |\rho|}{b^{N+1}} g_{N+1}\left(a,\lambda,T,|\rho|,e^{-\lambda T \kappa(\rho)}  \right),
    \end{align*} where
    \begin{multline*} 
    g_{N+1}\left(a,\lambda,T,|\rho|,e^{-\lambda T \kappa(\rho)}  \right)\\
    \begin{aligned}
         & =   \left( \frac{1}{2} e^{-\lambda T\kappa (\rho)} +\frac{1}{2}\right) a|\rho| \lambda T g_{N}\left(a,\lambda,T,|\rho|,e^{-\lambda T \kappa(\rho)}  \right)\\
        &\quad + \sum_{i=1}^{N} \frac{N!}{(N+1-i)!} g_{N+1-i}\left(a,\lambda,T,|\rho|,e^{-\lambda T \kappa(\rho)}  \right) i a \lambda T |\rho|^i + (N+1)! |\rho|^{N}.
    \end{aligned}
    \end{multline*} We have just proved \eqref{eq:ineGamma2}. 
\end{proof}
Combining the results from Theorem \ref{TheoremMomentsGamma} and Proposition \ref{prop:diff-bound} with inequality \eqref{eq:Taylor-remainder-N2}, we arrive at the following error bound.
\begin{corollary}\label{cor:ErrorBoundGamma}
When the squared volatility follows a gamma Ornstein-Uhlenbeck process and $\rho\leq 0$,  the remainder term of the $N^{\text{th}}$ order Taylor approximation of Theorem \ref{ApproxThm} is bounded by
\begin{align*}
     \lvert R_N\rvert &    \leq  \frac{\sqrt{aT}}{ b^{N+1} (N+1)! }\left(\sum_{n=1}^{N} \left( \binom{N+1}{n}S_0^{N+1-n}M_{(N+1-n,n)} (T,K) \right. \right. \nonumber\\
 & \quad \times \left.\sqrt{ aT |\rho|  g_{2N+2-2n} \left(a,\lambda,T,|\rho|,e^{-\lambda T \kappa(\rho)}  \right) f_{2n} \left(a,\frac{1}{\lambda},T \right) } \right) \nonumber\\
 & \quad  +  S_0^{N+1} M_{(N+1,0)}(T,K)\sqrt{ \lambda |\rho|  g_{2N+2} \left(a,\lambda,T,|\rho|,e^{-\lambda T \kappa(\rho)}  \right) }  \\ 
 & \quad \left. + M_{(0,N+1)}(T,K)\sqrt{ \frac{1}{\lambda}  f_{2N+2} \left(a,\frac{1}{\lambda},T \right) }  \right)
\end{align*} for $N\in \mathds{N}\setminus \{1\}$, where  $M_{(N+1-n,n)}$, $ f_{2n}$ and $g_{2N+2-2n}$ satisfy the properties in Proposition \ref{prop:diff-bound} and Theorem \ref{TheoremMomentsGamma}. In particular when $\rho=0$, the error term can be bounded by
\begin{align}\label{eq:BoundGamma}
     \lvert R_N\rvert & \le  \frac{1}{(N+1)!} \frac{aT}{\lambda b^{N+1}} f_{N+1}\left(a,\frac{1}{\lambda},T\right)M_{(0,N+1)}(T,K).
    \end{align}
\end{corollary}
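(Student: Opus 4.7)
The plan is to substitute the two polynomial bounds from Theorem \ref{TheoremMomentsGamma} into the Cauchy--Schwarz estimate \eqref{eq:Taylor-remainder-N2}, together with the observation (from the proof of Proposition \ref{Prop:GN}) that $E_{\mathds{Q}}[|P_T-1|^N] \le G_N$. The argument is essentially bookkeeping, so I do not expect any substantive obstacle; the main care needed is handling the boundary indices $n=0$ and $n=N+1$ separately and tracking the powers of $b$.

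First, I would begin from \eqref{eq:Taylor-remainder-N2}. Since $2N+2-2n$ is even, $E_{\mathds{Q}}[(P_T-1)^{2N+2-2n}]$ coincides with $E_{\mathds{Q}}[|P_T-1|^{2N+2-2n}]$, which by \eqref{eq:BoundMomentP} is bounded by $G_{2N+2-2n}$. Applying \eqref{eq:ineGamma2} to this quantity and \eqref{eq:ineGamma} to $|E_{\mathds{Q}}[(I_T-E_{\mathds{Q}}[I_T])^{2n}]|$, and then taking square roots, each summand of \eqref{eq:Taylor-remainder-N2} picks up a factor $b^{-(N+1-n)}\cdot b^{-n} = b^{-(N+1)}$, which together with $\sqrt{aT}/(N+1)!$ can be pulled outside the sum.

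Next I would split the index range $n=0,\ldots,N+1$ into three pieces. For the interior indices $n\in\{1,\ldots,N\}$, both polynomial bounds are active and their product under the square root produces the factor $\sqrt{aT|\rho|\, g_{2N+2-2n}\, f_{2n}}$ appearing in the stated inequality. For $n=0$, the factor $E_{\mathds{Q}}[(I_T-E_{\mathds{Q}}[I_T])^0]^{1/2}$ is trivially $1$, so only \eqref{eq:ineGamma2} contributes and the boundary term reduces to $S_0^{N+1}M_{(N+1,0)}(T,K)\sqrt{\lambda|\rho|\, g_{2N+2}}$. Symmetrically, for $n=N+1$, the factor $E_{\mathds{Q}}[(P_T-1)^0]^{1/2}=1$ is trivial and only \eqref{eq:ineGamma} is used, yielding $M_{(0,N+1)}(T,K)\sqrt{f_{2N+2}/\lambda}$. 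Adding the three contributions delivers the displayed upper bound.

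The case $\rho=0$ is simpler still: here $\kappa(0)=0$ and $\rho Z_{\lambda T} = 0$, so $P_T\equiv 1$ and the Cauchy--Schwarz route is unnecessary. Theorem \ref{Theo:Error} collapses directly to \eqref{eq:RNBoundRho0}, and a single application of \eqref{eq:ineGamma} with $N$ replaced by $N+1$ yields \eqref{eq:BoundGamma}. The only routine step is confirming the exponent of $b$ in the final estimate, which is immediate from the polynomial structure of $f_{N+1}$.
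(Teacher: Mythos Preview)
Your proposal is correct and follows exactly the approach indicated in the paper: the paper's own proof consists of the single sentence ``Combining the results from Theorem \ref{TheoremMomentsGamma} and Proposition \ref{prop:diff-bound} with inequality \eqref{eq:Taylor-remainder-N2}, we arrive at the following error bound,'' and you have simply spelled out that bookkeeping in detail, including the correct treatment of the boundary indices $n=0$ and $n=N+1$ and the special case $\rho=0$ via \eqref{eq:RNBoundRho0}.
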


Since $\kappa(\rho) = \frac{a\rho}{b-\rho}$ is a decreasing function with respect to the parameter $b$, Corollary \ref{cor:ErrorBoundGamma} tells us that if we fix all parameters in our model and we only allow to vary the parameter $b$, then the error becomes smaller as $b$ becomes bigger. In this case, the error bound of an option with strike $K$ and expiration date $T$ have the form $\mathcal{O}\left( \frac{1}{b^{N+1}}\right)$.

\subsubsection{Inverse Gaussian Ornstein-Uhlenbeck process}

If the variance process follows an inverse Gaussian Ornstein-Uhlenbeck process with parameters $a,b>0$, we can give an error bound for the $N^{\text{th}}$ order approximation of the form $\mathcal{O}\left(\frac{1}{b^{N+1}}\right)$. In the case $\rho=0$, the error bound has the form $\mathcal{O}\left(\frac{1}{b^{N+2}}\right)$. As we did before, we need to bound the moments of $I_T  -E_\mathds{Q}\left[I_T\right]$ and $P_T-1$.

\begin{theorem} \label{TheoremMomentsIG} When the squared volatility follows an inverse Gaussian Ornstein-Uhlenbeck process, we have
\begin{equation}
\left|E_{\mathds{Q}}\left[ \left(I_T -  E_{\mathds{Q}}\left[ I_T \right]\right)^N \right]\right| \leq \frac{aT}{\lambda b^{N+1}} f_N\left(a,\frac{1}{\lambda},\frac{1}{b},T\right)\label{eq:ineIG}    
\end{equation} for $N\in \mathds{N}$, where $f_{N}$ is a polynomial of $a$, $\frac{1}{\lambda}$, $\frac{1}{b}$  and $T$ with non-negative coefficients.  The bound $G_N$ of Proposition \ref{Prop:GN} satisfies \begin{equation}
   G_N \leq   \frac{a\lambda T |\rho|}{b^N} g_N\left(a,\lambda,\frac{1}{b},T,|\rho|,e^{-\lambda T \kappa(\rho)}  \right) \label{eq:ineIG2} 
\end{equation} for $N\in \mathds{N}$, where $g_{N}$ is a polynomial in $a$, $\lambda$, $\frac{1}{b}$, $T$, $|\rho|$ and $e^{-\lambda T \kappa(\rho)}$ with non-negative coefficients.
\end{theorem}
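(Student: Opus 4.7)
The plan is to mirror the proof of Theorem \ref{TheoremMomentsGamma} and establish both inequalities by induction on $N$, using the inverse Gaussian expressions from \eqref{eq:kappa_der_IG} in place of the gamma ones. The key facts I would record at the outset are that $\kappa^{(i)}(0) = \varphi_i a b^{-(2i-1)}$ (the second term in \eqref{eq:kappa_der_IG} vanishes at $\theta = 0$), in particular $\kappa'(0) = a/b$, and that $|\kappa(\rho)| \leq a|\rho|/b$ whenever $\rho \leq 0$ because $b^2 - 2\rho \geq b^2$. These estimates control every occurrence of $\kappa$ or its derivatives in the recursions of Corollary \ref{cor:Moments} and Proposition \ref{Prop:GN}. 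The crucial change relative to the gamma case is that $\kappa^{(i)}(0)$ now carries $1/b^{2i-1}$ rather than $1/b^i$, which is the source of the $1/b$ dependence allowed inside $f_N$ and $g_N$.

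For \eqref{eq:ineIG} I would take the base cases $N=1$ (trivial since the centred first moment is zero, so $f_1 = 0$) and $N = 2$ (computed directly from $\kappa''(0) = 2a/b^3$, giving $f_2 = 2$). The inductive step from $N$ to $N+1$ applies the recursion \eqref{eq:HMoments} with $\ell = 0$, together with the bounds $|\alpha_{0,T} - T| \leq T$ and $\int_0^T \alpha_{s,T}^i \, ds \leq T/\lambda^i$. The $\kappa'(0)(\alpha_{0,T} - T) H_{0,N}$ piece contributes a term of order $1/b^{N+2}$; the $i$-th sum term for $i = 1, \ldots, N$ contributes $1/b^{N+1+i}$; and the $i = N+1$ term (using $H_{0,0} = 1$) contributes $1/b^{2N+1}$. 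Each denominator is at least $b^{N+2}$, so after factoring out the common $\tfrac{aT}{\lambda b^{N+2}}$, the surplus powers $1/b^{i-1}$ and $1/b^{N-1}$ get absorbed into $f_{N+1}$ as a polynomial with non-negative coefficients in $a, 1/\lambda, 1/b, T$.

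The proof of \eqref{eq:ineIG2} is the analogous induction applied to the recursion \eqref{eq:recusiveMoment3}, with base case $N=1$ yielding $g_1 = \tfrac{1}{2} e^{-\lambda T \kappa(\rho)} + \tfrac{3}{2}$ directly from $|\kappa(\rho)| \leq a|\rho|/b$ and $\kappa'(0) = a/b$. In the inductive step the leading $-\lambda T \kappa(\rho)\bigl(\tfrac{1}{2} e^{-\lambda T \kappa(\rho)} + \tfrac{1}{2}\bigr) G_N$ contribution is of order $1/b^{N+1}$ (one $1/b$ from $\kappa(\rho)$ combined with the inductive $1/b^N$ for $G_N$), and the $i$-th sum term produces $1/b^{N+i}$, which is at least as small as $1/b^{N+1}$. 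Factoring out $\tfrac{a\lambda T|\rho|}{b^{N+1}}$ leaves a polynomial $g_{N+1}$ with non-negative coefficients in $a, \lambda, 1/b, T, |\rho|, e^{-\lambda T \kappa(\rho)}$.

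The main obstacle is bookkeeping rather than anything conceptual: one must verify that every contribution from the recursion carries at least the claimed power $b^{N+1}$ (respectively $b^N$) in its denominator, and that the residual $1/b$ factors assemble into a polynomial with non-negative coefficients in the listed variables. The argument follows the template of Theorem \ref{TheoremMomentsGamma} essentially verbatim once the gamma formula $\kappa^{(i)}(0) = i! a / b^i$ is replaced by the inverse Gaussian formula $\kappa^{(i)}(0) = \varphi_i a / b^{2i-1}$.
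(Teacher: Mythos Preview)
Your proposal is correct and follows essentially the same approach as the paper: both arguments proceed by induction on $N$, using Corollary \ref{cor:Moments} (with $\ell=0$) for \eqref{eq:ineIG} and the recursion \eqref{eq:recusiveMoment3} for \eqref{eq:ineIG2}, substituting the inverse Gaussian formula $\kappa^{(i)}(0)=\varphi_i a/b^{2i-1}$ in place of the gamma formula and tracking the resulting extra powers of $1/b$ into the polynomials $f_N$ and $g_N$. Your base cases and power-counting match the paper's (noting $\varphi_2=2$ so your $f_2=2$ agrees with the paper's $f_2=\varphi_2$).
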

\begin{proof}
  From equation \eqref{eq:kappa_der_IG} we obtain that
\begin{align}
    \kappa^{(i)}(0) & = \frac{\varphi_i a}{ b^{2i-1}} \text{ for all } i \in \mathds{N}, \label{eq:kappa0IG}
\end{align} where $\varphi_i$ is defined recursively as $\varphi_1 =1$ and
\begin{align*}
    \varphi_i & = \varphi_{i-1} (2n-3) + (2n-3)!! \text{ for all } i\in \mathds{N}\setminus\{1\}.
\end{align*} For $N=1$ and $N=2$ we have that
\begin{align*}
    \left|E_{\mathds{Q}}\left[ \left(I_T -  E_{\mathds{Q}}\left[ I_T \right]\right) \right]\right| & = 0\\
    \left|E_{\mathds{Q}}\left[ \left(I_T -  E_{\mathds{Q}}\left[ I_T \right]\right)^2 \right]\right| & =\frac{1}{\lambda} \frac{\varphi_2 a }{b^3} \int_0^T\left(1- e^{-\lambda (T-s)}\right)^2ds \leq  \frac{\varphi_2 a T}{\lambda b^3},
\end{align*} so the result in \eqref{eq:ineIG} is satisfied with $f_1=0$ and $f_2 =  \varphi_2$. If we assume that the claim \eqref{eq:ineIG} is true for $N$, then by application of \eqref{eq:HMoments} in Corollary \ref{cor:Moments} and equation \eqref{eq:kappa0IG} we have 
\begin{multline*}
    \left|E_{\mathds{Q}}\left[ \left(I_T -  E_{\mathds{Q}}\left[ I_T \right]\right)^{N+1} \right]\right| \\ 
    \begin{aligned}
    & \leq \frac{a}{b}  \left|E_{\mathds{Q}}\left[ \left(I_T -  E_{\mathds{Q}}\left[ I_T \right]\right)^{N} \right]\right| T \\
    & \quad + \sum_{i=1}^{N+1} \frac{1}{\lambda^{i-1}}\binom{N}{i-1} \frac{\varphi_i a}{b^{2i-1}} \left|E_{\mathds{Q}}\left[ \left(I_T -  E_{\mathds{Q}}\left[ I_T \right]\right)^{N+1-i} \right]\right| T \\
    & \leq  \frac{a}{b}  \frac{aT}{\lambda b^{N+1}} f_{N} \left(a,\frac{1}{\lambda},\frac{1}{b},T\right) T \\
    & \quad + \sum_{i=1}^{N-1} \frac{1}{\lambda^{i-1}} \binom{N}{i-1}  \frac{\varphi_i a}{b^{2i-1}} \frac{a T}{\lambda b^{N-i+2}}  f_{N+1-i}\left(a,\frac{1}{\lambda},\frac{1}{b},T\right)T\\
    & \quad + \frac{1}{\lambda^{N}} \frac{\varphi_{N+1} a T}{b^{2N+1}}  \\
    & = \frac{aT}{\lambda b^{N+2}} f_{N+1}\left(a,\frac{1}{\lambda},\frac{1}{b},T\right),
    \end{aligned}
\end{multline*}where 
\begin{align*}
    f_{N+1}\left(a,\frac{1}{\lambda},\frac{1}{b},T\right) & = a f_{N} \left(a,\frac{1}{\lambda},\frac{1}{b},T\right)T\\
    & \quad + \sum_{i=1}^{N-1} \frac{1}{\lambda^{i-1}} \binom{N}{i-1}  \frac{\varphi_i a}{b^{i-1}}  f_{N+1-i}\left(a,\frac{1}{\lambda},\frac{1}{b},T\right)T\\
    & \quad + \frac{1}{\lambda^{N-1}} \frac{\varphi_{N+1}}{b^{N-1}}.
\end{align*} We have just proved inequality \eqref{eq:ineIG} by induction.

The second inequality will be proved by induction as well. Using the derivatives of the cumulant function \eqref{eq:kappaIG} and \eqref{eq:GN} in Proposition \ref{Prop:GN}, for the initial case we have
    \begin{align*}
        G_1 & = -\lambda T \frac{a \rho}{\sqrt{b^2-2\rho}}  \left( \frac{1}{2} e^{-\lambda T\kappa (\rho)} +\frac{1}{2}\right)+ \frac{a|\rho|}{b}\lambda T\\
        &\leq \frac{a|\rho| \lambda T}{b}  \left(  \frac{1}{2}e^{-\lambda T\kappa (\rho)} +\frac{3}{2} \right),
    \end{align*} where we have used the fact that $\rho\leq 0$ and $g_1$ is \[g_1\left(a,\frac{1}{\lambda},\frac{1}{b},T\right) = \frac{1}{2}e^{-\lambda T\kappa (\rho)} +\frac{3}{2} .\]
    If we assume that the result in \eqref{eq:ineIG2} is true for $N$, for $N+1$ we obtain the following inequality
      \begin{align*}
        G_{N+1} & = -\lambda T \frac{a \rho}{\sqrt{b^2-2\rho}}  \left( \frac{1}{2} e^{-\lambda T\kappa (\rho)} +\frac{1}{2}\right) G_{N}\\
        & \quad +  \sum_{i=1}^{N+1} \binom{N}{i-1} G_{N+1-i} \frac{\varphi_i a}{b^{2i-1}} \lambda T |\rho|^i \\
        & \leq -\lambda T \frac{a \rho}{b}  \left( \frac{1}{2} e^{-\lambda T\kappa (\rho)} +\frac{1}{2}\right) \frac{a|\rho| \lambda T}{b^{N}} g_{N}\left(a,\lambda,T,\frac{1}{b},|\rho|,e^{-\lambda T \kappa(\rho)}  \right) \\
        &\quad +  \sum_{i=1}^{N} \binom{N}{i-1}  \frac{a|\rho| \lambda T}{b^{N+1-i}} g_{N+1-i}\left(a,\lambda,T,\frac{1}{b},|\rho|,e^{-\lambda T \kappa(\rho)}  \right) \frac{\varphi_i a }{b^{2i-1}} \lambda T |\rho|^i \\
        & \quad + \frac{\varphi_{N+1} a}{b^{2N+1}}\lambda T |\rho|^{N+1} \\
        & = \frac{a\lambda T |\rho|}{b^{N+1}} g_{N+1}\left(a,\lambda,T,\frac{1}{b},|\rho|,e^{-\lambda T \kappa(\rho)} \right),
    \end{align*} where
    \begin{multline*} 
       g_{N+1}\left(a,\lambda,T,\frac{1}{b},|\rho|,e^{-\lambda T \kappa(\rho)}  \right)\\
    \begin{aligned}
        & =   \left( \frac{1}{2} e^{-\lambda T\kappa (\rho)} +\frac{1}{2}\right) a|\rho| \lambda T g_{N}\left(a,\lambda,T,\frac{1}{b},|\rho|,e^{-\lambda T \kappa(\rho)}  \right)\\
        &\quad + \sum_{i=1}^{N} \binom{N}{i-1} g_{N+1-i}\left(a,\lambda,T,\frac{1}{b},|\rho|,e^{-\lambda T \kappa(\rho)}  \right) \frac{\varphi_i a \lambda T |\rho|^i}{b^{i-1}} \\
        & \quad+ \frac{\varphi_{N+1} |\rho|^{N}}{b^{N}}.
       \end{aligned}       
    \end{multline*} This ends the induction procedure.
\end{proof}

Finally, from Theorem \ref{TheoremMomentsIG}, Proposition \ref{Prop:GN} and equation \eqref{eq:Taylor-remainder-N2}, we have the following error estimate.
\begin{corollary}\label{cor:IGErrorBound}
When the variance  follows an inverse Gaussian Ornstein-Uhlenbeck process and $\rho\leq 0$,  the remainder term of the $N^{\text{th}}$ order Taylor approximation of Theorem \ref{ApproxThm} satisfies
    \begin{align*}
     \lvert R_N\rvert &    \leq  \frac{\sqrt{aT}}{ b^{N+1} (N+1)! }\left(\sum_{n=1}^{N} \left( \binom{N+1}{n}S_0^{N+1-n}M_{(N+1-n,n)} (T,K) \right. \right. \nonumber\\
 & \quad \times \left.\sqrt{ \frac{aT|\rho|}{b} g_{2N+2-2n} \left(a,\lambda,\frac{1}{b},T,|\rho|,e^{-\lambda T \kappa(\rho)}  \right)  f_{2n} \left(a,\frac{1}{\lambda},\frac{1}{b},T \right) } \right) \nonumber\\
 & \quad +  M_{(0,N+1)}(T,K) \sqrt{ \frac{1}{b^2 \lambda}  f_{2N+2} \left(a,\frac{1}{\lambda},\frac{1}{b},T \right) } \\
 & \left. \quad + S_0^{N+1} M_{(N+1,0)}(T,K)  \sqrt{\lambda |\rho|  g_{2N+2} \left(a,\lambda,\frac{1}{b},T,|\rho|,e^{-\lambda T \kappa(\rho)}  \right) } \right)
    \end{align*} for $N\in \mathds{N}\setminus \{1\}$, where  $M_{(N+1-n,n)}$, $f_{2n}$ and $g_{2N + 2 -2n}$ satisfy the properties in Proposition \ref{prop:diff-bound} and Theorem \ref{TheoremMomentsIG}. In particular when $\rho=0$, the error term is bounded by
\begin{align}\label{eq:BoundGammaError}
     \lvert R_N\rvert & \le  \frac{1}{(N+1)!} \frac{aT}{\lambda b^{N+2}} f_{N+1}\left(a,\frac{1}{\lambda},\frac{1}{b},T\right)M_{(0,N+1)}(T,K).
    \end{align}
\end{corollary}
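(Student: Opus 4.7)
The proof is a direct assembly of the three ingredients that the corollary explicitly points to: the Cauchy--Schwarz bound \eqref{eq:Taylor-remainder-N2}, the moment bound \eqref{eq:ineIG} on the centered integrated variance, and the bound on $E_{\mathds{Q}}[|P_T-1|^N]$ obtained by chaining Proposition \ref{Prop:GN} with inequality \eqref{eq:ineIG2}. So my plan is essentially bookkeeping of powers of $b$, with the special cases $\rho=0$, $n=0$ and $n=N+1$ handled separately.

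First I would start from \eqref{eq:Taylor-remainder-N2}, which gives
\[
|R_N|\le\frac{1}{(N+1)!}\sum_{n=0}^{N+1}\binom{N+1}{n}S_0^{N+1-n}M_{(N+1-n,n)}(T,K)\,A_n B_n,
\]
where $A_n=E_{\mathds{Q}}[(P_T-1)^{2N+2-2n}]^{1/2}$ and $B_n=E_{\mathds{Q}}[(I_T-E_{\mathds{Q}}[I_T])^{2n}]^{1/2}$. Since $\rho\le0$ and $\kappa$ is infinitely differentiable on $(-\infty,\hat\kappa)$ for the inverse Gaussian case (see \eqref{eq:kappa_der_IG}), all these moments are finite, so \eqref{eq:Taylor-remainder-N2} is valid.

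Next, for each $n$ in the sum I would bound $A_n$ and $B_n$ separately. For $B_n$ with $n\ge 1$, Theorem \ref{TheoremMomentsIG} gives
\[
B_n\le\sqrt{\frac{aT}{\lambda b^{2n+1}}\,f_{2n}\!\left(a,\tfrac{1}{\lambda},\tfrac{1}{b},T\right)}=\frac{1}{b^{n+\nicefrac{1}{2}}}\sqrt{\frac{aT}{\lambda}\,f_{2n}\!\left(a,\tfrac{1}{\lambda},\tfrac{1}{b},T\right)},
\]
while Proposition \ref{Prop:GN} together with \eqref{eq:ineIG2} gives, for $n\le N$,
\[
A_n\le G_{2N+2-2n}^{\nicefrac{1}{2}}\le\frac{1}{b^{N+1-n}}\sqrt{a\lambda T|\rho|\,g_{2N+2-2n}\!\left(a,\lambda,\tfrac{1}{b},T,|\rho|,e^{-\lambda T\kappa(\rho)}\right)}.
\]
Multiplying these two bounds for $1\le n\le N$ produces the claimed factor $\sqrt{aT}/b^{N+1}$ out front and $\sqrt{(aT|\rho|/b)\,g_{2N+2-2n}f_{2n}}$ inside, as can be checked by a short power count: $b^{-(n+\nicefrac{1}{2})}\cdot b^{-(N+1-n)}=b^{-(N+\nicefrac{3}{2})}$, and $\sqrt{aT/\lambda}\cdot\sqrt{a\lambda T|\rho|}=aT\sqrt{|\rho|}$. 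The two boundary terms are handled analogously: for $n=0$ use $B_0=1$ and $A_0\le b^{-(N+1)}\sqrt{a\lambda T|\rho|\,g_{2N+2}}$, producing the isolated $S_0^{N+1}M_{(N+1,0)}$ term; for $n=N+1$ use $A_{N+1}=1$ and $B_{N+1}\le b^{-(N+\nicefrac{3}{2})}\sqrt{(aT/\lambda)\,f_{2N+2}}$, producing the isolated $M_{(0,N+1)}$ term with factor $\sqrt{1/(b^2\lambda)\,f_{2N+2}}$ after pulling out $\sqrt{aT}/b^{N+1}$. Collecting the three pieces yields the stated inequality.

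Finally, for the case $\rho=0$ I would not invoke \eqref{eq:Taylor-remainder-N2} at all; instead I would use the sharper bound \eqref{eq:RNBoundRho0}, into which the $N=N+1$ instance of \eqref{eq:ineIG} substitutes directly to give \eqref{eq:BoundGammaError}. There is essentially no obstacle; the only point requiring a little care is the bookkeeping of the exponents of $b$ in the power count above and the verification that the moment condition \eqref{MomentEq} of Theorem \ref{ApproxThm} is met, which is immediate because $\kappa$ is $C^\infty$ on $(-\infty,b^2/2)$ and $\rho\le0<b^2/2$.
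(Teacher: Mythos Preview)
Your proposal is correct and follows exactly the paper's approach: the paper's own proof is a one-line appeal to combining \eqref{eq:Taylor-remainder-N2} with Theorem \ref{TheoremMomentsIG} and Proposition \ref{Prop:GN}, and you have simply spelled out that combination together with the power-of-$b$ bookkeeping and the separate treatment of the boundary terms $n=0$, $n=N+1$, and the $\rho=0$ case via \eqref{eq:RNBoundRho0}. One small slip worth flagging: your own power count for the $n=N{+}1$ term actually yields $\sqrt{f_{2N+2}/(b\lambda)}$ rather than $\sqrt{f_{2N+2}/(b^2\lambda)}$ after extracting $\sqrt{aT}/b^{N+1}$ (since $b^{-(2N+3)/2}=b^{-(N+1)}\cdot b^{-1/2}$), so the extra factor of $b$ in the corollary's displayed bound appears to be a typo rather than something your argument should reproduce.
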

In this case, the cumulant function satisfies $\kappa(\rho) = \frac{a\rho}{\sqrt{b^2-2\rho}}$ and it is a decreasing function with respect to the parameter $b$. From Corollary \ref{cor:IGErrorBound} the error is bounded by a function of the form  $\mathcal{O}\left( \frac{1}{b^{N+1}}\right)$ and when $\rho=0$, the error bound has the form $\mathcal{O}\left( \frac{1}{b^{N+2}}\right)$. 

\section{Numerical examples}\label{NumExamplesSect}

In this section, we compare the approximation result with the option price closed formula, computed using the characteristic function and the formula provided by \citet{carr1999option}, when the variance is an inverse Gaussian or a gamma Ornstein-Uhlenbeck process. Appendix \ref{CFsect} contains the characteristic functions that we use to compare our approximation results.

The homogeneity property of the Black-Scholes put price \citep{joshi2001log} gives us\[\Pi_P(S_0,K,T) = K\Pi_P\left(\frac{S_0}{K},1,T\right).\]For simplicity the numerical results in this section correspond to the put prices of the form  $\Pi_P\left(x,1,T\right)$ for various values of the moneyness $x$.

To compute the approximation formula in \eqref{eq:Taylor-approx-N}, we need the first $N$ derivatives of the cumulant generating function $\kappa$ and $2^{\text{nd}}$ to $N^{\text{th}}$ order derivatives of $\BS_{\Put}$. The derivatives for the cumulant generating function can be obtained from Remark \ref{remark:kappa} for inverse Gaussian and gamma Ornstein-Uhlenbeck processes. Higher order derivatives of $\BS_{\Put}$ can be derived from the second order derivatives shown in Remark \ref{remark:second} by using a symbolic programming language; we use SageMath \citep{sagemath}. \begin{remark} \label{remark:second}
The second order derivatives of $\BS_{\Put}$ are
    \begin{align*}
    \frac{\partial^2 \BS_{\Put} }{\partial x^2} (x,y) &=  \frac{1}{x\sqrt{y}}\phi(d_+(x,y)),\\
    \frac{\partial^2 \BS_{\Put} }{\partial y^2} (x,y) &= \frac{x}{4 y^{\nicefrac{3}{2}}} \phi(d_+(x,y)) \left(d_-(x,y)d_+(x,y) -1\right),\\
    \frac{\partial^2 \BS_{\Put} }{\partial y \partial x } (x,y) &= - \frac{1 }{2y}\phi(d_+(x,y))d_-(x,y)
\end{align*} for all $x,y\in\mathds{R}$ \citep{das2022closed}.
\end{remark}

Figure \ref{Fig-IG-Price} displays put option prices using the approximation formula \eqref{eq:Taylor-approx-N} when the squared volatility is an inverse Gaussian Ornstein-Uhlenbeck process. We compute from $2^{\text{nd}}$ to $6^\text{{th}}$ order approximations. We additionally display absolute error when compared to the price obtained by using the characteristic function. We note that the approximation closely resemble the option values determined by the characteristic function.

Figures \ref{Fig-IG} and \ref{Fig-IG2} show the absolute value of the error in $\log_{10}$ scale for various expiration dates, when the variance process is an inverse Gaussian Ornstein-Uhlenbeck process. The parameter $b$ in Figure \ref{Fig-IG} is valued at $80$, whereas in Figure \ref{Fig-IG2}, it is $20$. Note that when the parameter $b$ increases, the error decreases. The results in Corollary \ref{cor:IGErrorBound} are consistent with this. Similar outcomes can be achieved if the squared volatility is a gamma Ornstein-Uhlenbeck process. This is shown in Figures \ref{Fig-Gamma} and \ref{Fig-Gamma2}. Similar to the preceding instance, the error decreases with increasing parameter $b$. Observe that the inverse Gaussian and gamma model parameters are identical in Figures \ref{Fig-IG}-\ref{Fig-Gamma2}. That being said, the inverse Gaussian model yields smaller errors than the gamma model. Equations \eqref{eq:ineGamma} and \eqref{eq:ineIG} provide the reason for that: the moments of the inverse Gaussian model decrease faster with respect to $b$ than the moments of the gamma model.

We conduct numerical experiments in Figures \ref{img:IGAymptotic} and \ref{img:GammaAymptotic} to confirm the asymptotic results in Theorem \ref{Theo:Error}, for the  gamma and inverse Gaussian Ornstein-Uhlenbeck processes. When the strike reaches zero and infinity, the error converges to zero. Note again that the error associated with the gamma model are larger than for the inverse Gaussian model.

Finally, we examine the stability of put option prices as the mean reversion rate $\lambda$ increases. Figure \ref{img:Estability} compares performance of our approximation, Monte Carlo approximation \citep[p. 69]{schoutens2003levy}, and the price obtained by the characteristic function. As we can see in Figure \ref{img:Estability}, when $\lambda$ increases second and third order approximations remain stable. For large values of $\lambda$, however, the price provided by the characteristic function is unstable.We observe that the prices provided by the second and third order approximations are similar to those provided by the Monte Carlo approximation.

\begin{figure}[ht]
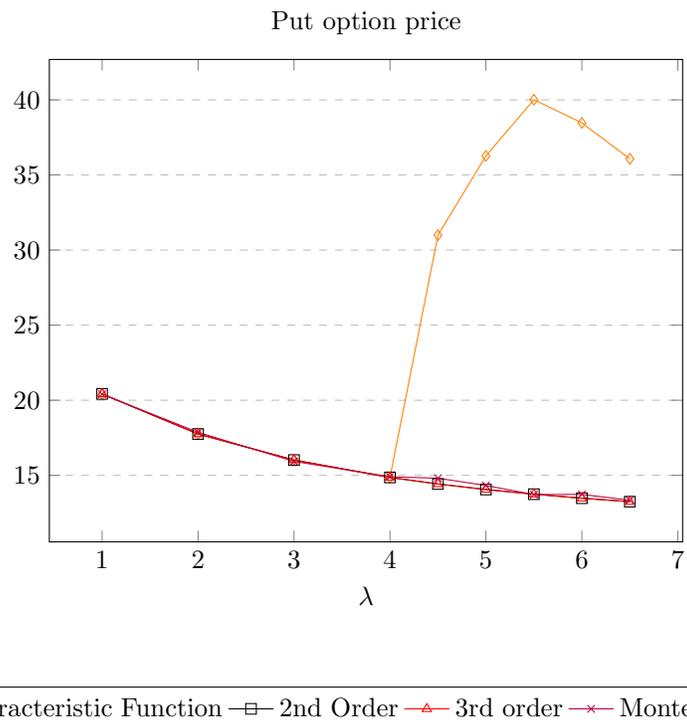

\begin{subfigure}{.99\textwidth} \centering
\end{subfigure} 
 \caption{Put option prices with $T=1$ and $S_0=K=100$ when the squared volatility follows an inverse Gaussian Ornstein-Uhlenbeck process with parameters  $a=1$, $b=10$, $\rho = -0.3$, $r=0.05$, $\sigma^2_0 = 0.5$  and  $\lambda$ is varying.}
 \label{img:Estability}
\end{figure}

\section{Conclusion}
In this paper, we constructed approximation formulas for the Barndorff-Nielsen and Shephard model using the Romano and Touzi formula \citep{romano1997contingent}. Approximation formulas based on the Romano and Touzi formula have been implemented before for other models, like the Heston model or the GARCH diffusion model \citep{drimus2011closed,BARONEADESI2005287}. However, the authors only implemented second and third-order approximations. In our case, we can compute a closed-form approximation formula for any order.

Additionally, we performed an analysis of the error, based on the results given by \citet{das2022closed}. We extended the results of \citet{das2022closed} for $3^{\text{rd}}$ derivatives to derivatives of any order. These results showed that the error disappears for small and large values of $K$. We also gave an estimate for the error term. When $\rho=0$, the error bound for the $N^{th}$-order approximation is of the form $\mathcal{O}\left(\frac{1}{b^{N+1}}\right)$ and $\mathcal{O}\left(\frac{1}{b^{N+2}}\right)$ when the squared volatility follows a gamma Ornstein-Uhlenbeck process or an inverse Gaussian Ornstein-Uhlenbeck process, respectively. When $\rho \leq 0$, the error term can be bounded by an estimate of the form $\mathcal{O}\left(\frac{1}{b^{N+1}}\right)$ for gamma and inverse Gaussian models. As far as we know, these results have never been obtained for approximations based on the  Romano and Touzi formula. Finally, we did numerical experiments that support the theoretical results obtained in Sections \ref{SectApprox}--\ref{ErrorSect}.

This paper presented results for Barndorff-Nielsen and Shephard models in which the squared volatility follows a gamma Ornstein-Uhlenbeck process or an inverse Gaussian Ornstein-Uhlenbeck. These results can be used for other Barndorff-Nielsen and Shephard models in which the variance process follows another type of Ornstein-Uhlenbeck process, like the tempered stable Ornstein-Uhlenbeck process \citep[p.~68-70]{schoutens2003levy}. The only requirement is knowledge of the cumulant function $\kappa$ and its derivatives. For future work, this type of approximation can be used for more advanced versions of the Barndorff-Nielsen and Shephard model, such as the fractional Barndorff-Nielsen and Shephard model \citep{salmon2021fractional} or the generalized Barndorff-Nielsen and Shephard model \citep{sengupta2016generalized}. 

\begin{appendices}
\section{A bound for the derivatives of the error term}\label{DerBoundsect}

In this section, we show that high order derivatives of $\BS_{\Put}$ that appear in the error term \eqref{eq:Taylor-remainder-N} is bounded by a deterministic function when the volatility is bounded from below. First, we have the following auxiliary result.

\begin{lemma}\label{lemma:Das}
    Let $f:(0,\infty )\times [c,\infty )\to [0,\infty)$ be a continuous function with $c>0$ that satisfies
    \begin{align}
        &\lim_{y\to \infty } f(x,y) =  0 \text{ for all } x\in (0,\infty), \label{eq:PropDascond1}\\
         &\lim_{x\to 0} f(x,y)  = \lim_{x\to \infty } f(x,y) = 0 \text{ for all } y\in [c,\infty) \label{eq:PropDascond2}.
    \end{align}then the function $f$ has a  finite maximum on $(0,\infty )\times [c,\infty )$.
\end{lemma}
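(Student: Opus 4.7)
The approach I would take is compactification together with the extreme value theorem. Form the compact product $\bar{D}=[0,\infty]\times[c,\infty]$ from the one-point compactifications of $(0,\infty)$ and $[c,\infty)$, so that $\bar D$ is homeomorphic to a closed square. Extend $f$ to a function $\bar f\colon \bar D\to[0,\infty)$ by keeping $\bar f=f$ on the original domain and setting $\bar f\equiv 0$ on the added boundary $B=(\{0,\infty\}\times[c,\infty])\cup((0,\infty)\times\{\infty\})$. Granting that $\bar f$ is continuous on $\bar D$, the rest is painless: by the extreme value theorem $\bar f$ attains its maximum at some $(x^{*},y^{*})\in\bar D$; since $\bar f\equiv 0$ on $B$, that maximum is either zero (in which case $f\equiv 0$ and the claim is trivial) or strictly positive and attained at a point of $(0,\infty)\times[c,\infty)$, which then serves as a finite maximizer of $f$.

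The only substantive work is therefore verifying continuity of $\bar f$ on $\bar D$. Continuity at interior points is just the hypothesized continuity of $f$. At a typical boundary point, say $(x_0,\infty)$ with $x_0\in(0,\infty)$, I would argue that $f(x_n,y_n)\to 0$ for every sequence $(x_n,y_n)\in(0,\infty)\times[c,\infty)$ with $x_n\to x_0$ and $y_n\to\infty$: hypothesis \eqref{eq:PropDascond1} gives the axis-parallel version with $x_n\equiv x_0$, and combining this with continuity of $f$ on the compact sub-rectangles $[x_0/2,2x_0]\times[c,m]$ for growing $m$ (which contain the tail of $(x_n,y_n)$ once $m$ is large enough) promotes the pointwise limit to a joint one via a contradiction/subsequence argument. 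The remaining boundary points $(0,y_0)$, $(\infty,y_0)$, and the corners $(0,\infty)$, $(\infty,\infty)$ are dealt with by the symmetric combinations of \eqref{eq:PropDascond1}--\eqref{eq:PropDascond2}, with corners handled by iterated two-step limits.

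The step I expect to be most delicate is precisely the promotion of axis-parallel pointwise limits to joint continuity of $\bar f$ at boundary points of $\bar D$, since the hypotheses only supply the former. Any additional uniform decay that $f$ inherits in the intended application (for instance the Gaussian factor $\phi(d_{\pm})$ that dominates the derivatives of $\BS_{\Put}$ to which this lemma is ultimately applied in Proposition~\ref{prop:diff-bound}) makes this upgrade automatic, and I expect to lean on such structure if a purely pointwise sequential-compactness argument turns out to be insufficient.
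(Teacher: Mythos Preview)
Your compactification-plus-extreme-value-theorem approach is essentially the paper's own: the paper extends $f$ by zero along $\{0\}\times[c,\infty)$, asserts that the extension is continuous and vanishes outside a large ball, and then applies compactness on that ball. So the overall strategy matches.

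The gap you flagged is genuine and in fact fatal to the lemma as stated. Axis-parallel limits do not imply joint vanishing at the compactified boundary, and without that the conclusion fails: take
\[
f(x,y)=\sum_{n\ge 1} n\,\psi\bigl(4(x-n)\bigr)\,\psi\bigl(4(y-c-n)\bigr),
\]
where $\psi$ is any continuous non-negative bump supported on $[-1,1]$ with $\psi(0)=1$. For each fixed $x$ (respectively $y$) at most one summand is nonzero, so all three limits in \eqref{eq:PropDascond1}--\eqref{eq:PropDascond2} hold; yet $f(n,c+n)=n$, so $f$ is unbounded and has no finite maximum. The paper's proof glosses over precisely this point, asserting continuity of the extension at $x=0$ and decay of $g$ outside a large ball without justification. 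Your instinct to lean on the Gaussian factor $\phi(d_+)$ present in the intended application (Proposition~\ref{Prop:GeneralBound}) is therefore the correct repair: that factor supplies the uniform joint decay needed to make either argument rigorous.
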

\begin{proof}
When $f(x,y) = 0$ for all  $(x,y) \in (0,\infty )\times [c,\infty )$, the proof is trivial and the maximum is zero. 

Assume now that there exists some $ (x_0,y_0)\in (0,\infty )\times [c,\infty )$ such that $f(x_0,y_0)>0$. Let us define the function $g: [0,\infty) \times [c,\infty) \to [0,\infty)$ as
\begin{align*}
    g(x,y)&= \begin{cases}
    f(x,y)     & \text{if }(x,y) \in (0,\infty) \times [c,\infty) \\
       0   & \text{if }(x,y) \in \{ 0 \} \times [c,\infty).
    \end{cases}
\end{align*} Observe that the function $g$ is continuous on $[0,\infty) \times [c,\infty)$. From conditions \eqref{eq:PropDascond1}--\eqref{eq:PropDascond2}  we get
    \begin{align}\label{eq:PropDasLim}
        \lim_{(x,y)\to (\infty,\infty) } g(x,y) &=0. 
    \end{align} 
    By the definition of the limit, we have that, for any $\epsilon >0$, there exists $\delta>0$ such that $|g(x,y)| < \epsilon$ for all $(x,y)\in [0,\infty)\times  [c,\infty)$   with $x^2 + y^2 > \delta^2$.
    Let us define
    $$
        \mathcal{A}_\varepsilon =\{ (x,y) \in [0,\infty) \times [c,\infty) : x^2 + y^2\leq \delta^2 \}.
    $$
    If we pick $\epsilon <f(x_0,y_0)$, then we know that if the maximum of $g$ exists then it must be in the set $\mathcal{A}_\varepsilon$. Because $\mathcal{A}_\varepsilon$ is compact and $g$ is a continuous function, we know that $g$ has a maximum on $\mathcal{A}_\varepsilon$ and hence $g$ has a finite maximum on $[0,\infty )\times [c,\infty )$. Finally, notice that
    \begin{align}\label{eq:PropdasEqual}
        \max_{(x,y)\in[0,\infty )\times [c,\infty )} g(x,y) = \max_{(x,y)\in(0,\infty )\times [c,\infty )} f(x,y).   
    \end{align} This can be proved by contradiction. If equation \eqref{eq:PropdasEqual} is not satisfied, then the maximum value of $g$ is zero, but this is not possible since $g(x_0,y_0)>0$.
 \end{proof}

Lemma \ref{lemma:Das} can now be used to bound high order derivatives of the $\BS_{\Put}$.

 \begin{proposition}\label{Prop:GeneralBound}
   For every $\mathcal{C}>0$, $\xi=(\xi_x,\xi_y)$, $\xi_x,\xi_y\in\mathds{N}_0$ such that $\lvert\xi\rvert = \xi_x+\xi_y\ge3$, there exists a function $M_{\xi,\mathcal{C}}:[0,\infty)\times(0,\infty)\rightarrow[0,\infty)$ such that
    \[
    \max_{(x,y) \in (0,\infty)\times [\mathcal{C},\infty) }\left\lvert \frac{\partial^{\lvert\xi\rvert} \BS_{\Put} }{\partial x^{\xi_x} \partial y^{\xi_y}} (x,y) \right\rvert = M_{\xi,\mathcal{C}}(T,K).
    \] The function $M_{\xi,\mathcal{C}}$ is bounded, and for fixed $K$ and $T$ its behavior is characterized by the functions $\zeta$ and $\eta$, respectively, where
    \begin{align*}
        \zeta(T) &= A_\zeta e^{-E_\zeta (rT)^2 - D_\zeta rT} \left|\sum_{k=0}^{n_\zeta} c_{k\zeta} (rT)^k \right|, \\
        \eta(K) &= A_\eta K^{-E_\eta \ln K - D_\eta} \left|\sum_{k=0}^{n_\eta} (-1)^k c_{k\eta} (\ln K)^k\right|,
    \end{align*}
    where $A_\zeta,A_\eta,D_\zeta,D_\eta\in\mathds{R}$, $n_{\zeta},n_{\eta}\in\mathds{N}_0$, $c_{0\zeta},\ldots,c_{n_\zeta\zeta},c_{0\eta},\ldots,c_{n_\eta\eta}\in\mathds{R}$ and     $E_\zeta,E_\eta>0$.
    In particular,
    \begin{align}
&\lim_{T\rightarrow0} \lvert M_{\xi,\mathcal{C}}(T,K) \rvert  < \infty  \text{ for all } K>0,\label{eq:MT02} \\
&\lim_{K\rightarrow0}M_{\xi,\mathcal{C}}(T,K)  = \lim_{K\rightarrow\infty}M_{\xi,\mathcal{C}}(T,K) = 0 \text{ for all } T>0, \label{eq:MK0Inf2}\\
&\text{if } r\neq 0, \text{ then }  \lim_{T\rightarrow\infty}M_{\xi,\mathcal{C}}(T,K) = 0 \text{ for all } K>0.\label{eq:MTInf2}
    \end{align}
\end{proposition}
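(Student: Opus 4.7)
The plan is to apply Lemma \ref{lemma:Das} to the function $f(x,y) = \left\lvert \frac{\partial^{\lvert\xi\rvert} \BS_{\Put}}{\partial x^{\xi_x} \partial y^{\xi_y}}(x,y) \right\rvert$ on $(0,\infty)\times[\mathcal{C},\infty)$. By Proposition \ref{PropDer}, this function admits the closed-form representation
\[
 f(x,y) = \frac{\lvert A_\xi\rvert}{x^n y^{\nicefrac{m}{2}}}\phi(d_+(x,y))\lvert F_\xi(d_+(x,y),d_-(x,y),\sqrt{y})\rvert,
\]
which is manifestly continuous on its domain, so only the three limit conditions of Lemma \ref{lemma:Das} need to be verified. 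Once this is done, the finite maximum $M_{\xi,\mathcal{C}}(T,K)$ exists, and we set $M_\xi(T,K)=M_{\xi,\mathcal{C}}(T,K)$ with $\mathcal{C}=\min\{E_{\mathds{Q}}[I_T],\sigma_0^2 T\}>0$ in Proposition \ref{prop:diff-bound} (using that $(1-u)E_{\mathds{Q}}[I_T]+uI_T$ is bounded below almost surely because $\sigma^2\ge\sigma_0^2 e^{-\lambda t}$).

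The three limit checks all rest on the fact that the Gaussian factor $\phi(d_+)=\frac{1}{\sqrt{2\pi}}e^{-d_+^2/2}$ dominates every polynomial and reciprocal‐power factor. Writing $u=\ln(x/K)$, we have $d_\pm=\frac{u+rT\pm y/2}{\sqrt{y}}$, hence $d_+^2\ge\frac{u^2}{2y}-C(y,T,K)$ for $|u|$ large, so $\phi(d_+)\cdot x^{-n}=\phi(d_+)e^{-nu}$ decays like $e^{-u^2/(2y)}$ as $x\to0$ or $x\to\infty$; this beats the polynomial growth of $F_\xi$ in $d_\pm$ (which is polynomial in $u$). For $y\to\infty$ with $x,T,K$ fixed, $d_+\sim \sqrt{y}/2$, so $\phi(d_+)\sim e^{-y/8}$, which kills the $y^{-m/2}$ and the at-most polynomial growth of $F_\xi$ in $\sqrt{y}$. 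This verifies the hypotheses of Lemma \ref{lemma:Das} and produces the finite maximum.

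To extract the explicit structural forms of $\zeta$ and $\eta$, I would complete the square on the exponent $-\tfrac12 d_+^2$ viewed once as a function of $rT$ (with $K,y$ fixed) and once as a function of $\ln K$ (with $T,y$ fixed). In the first case, $-\tfrac12 d_+^2= -\frac{1}{2y}(rT)^2-\frac{1}{y}(\ln(x/K)+\tfrac{y}{2})(rT)+\text{(terms free of }rT)$, so after also multiplying by the $F_\xi$ polynomial (which is polynomial in $rT$ of known degree), the $T$-dependence of $f$ factors as $e^{-E_\zeta(rT)^2-D_\zeta rT}$ times a polynomial in $rT$ of bounded degree, where $E_\zeta=\tfrac{1}{2y}\ge\tfrac{1}{2y_\ast}>0$ at the maximising $y=y_\ast$. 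The same manipulation in $\ln K$ gives the stated $K^{-E_\eta\ln K-D_\eta}$ form. Taking the supremum over $(x,y)\in(0,\infty)\times[\mathcal{C},\infty)$ (which is finite by the previous step) then absorbs the $(x,y)$-dependent coefficients into the constants $A_\zeta,D_\zeta,c_{k\zeta}$ and $A_\eta,D_\eta,c_{k\eta}$, while $E_\zeta,E_\eta$ remain strictly positive because $y$ is bounded.

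The three limit properties \eqref{eq:MT02}--\eqref{eq:MTInf2} are then immediate consequences of the $\zeta,\eta$ representations: the super-exponential factor $e^{-E_\zeta(rT)^2}$ (respectively $K^{-E_\eta\ln K}=e^{-E_\eta(\ln K)^2}$) beats any polynomial in $rT$ (respectively $\ln K$), giving \eqref{eq:MTInf2} and \eqref{eq:MK0Inf2}; while at $T=0$ the expression $\zeta(T)$ is continuous with a finite value, giving \eqref{eq:MT02} (this is precisely where the condition $y\ge\mathcal{C}>0$ is essential—without it the $y^{-m/2}$ factor would blow up). The main technical obstacle is bookkeeping rather than conceptual: ensuring that when the supremum over $(x,y)$ is taken, the coefficients in the $T$-polynomial and $K$-polynomial remain finite and that $E_\zeta,E_\eta$ stay bounded below by positive constants uniform in the other parameters. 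The restriction $y\ge\mathcal{C}$ provides exactly this uniformity.
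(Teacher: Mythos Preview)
Your approach is essentially the same as the paper's: invoke Proposition~\ref{PropDer} for the structural form, verify the three vanishing limits so that Lemma~\ref{lemma:Das} applies, and then read off the $\zeta,\eta$ shapes by completing the square in $rT$ and $\ln K$---the paper does exactly this but outsources the limit checks and the $\zeta,\eta$ derivation to Lemmas~5.1 and~5.2 of \citet{das2022closed}, whereas you spell them out. One caveat: the step where you ``absorb the $(x,y)$-dependent coefficients into the constants'' after taking the supremum is not literally correct (a supremum of functions of the form $A e^{-E(rT)^2 - D rT}|P(rT)|$ need not retain that form, since the maximiser $(x_\ast,y_\ast)$ moves with $T$), but the paper's own proof is equally informal here---it simply evaluates at the maximising point and cites \citet{das2022closed}---so your argument is at the same level of rigor, and the limit conclusions \eqref{eq:MT02}--\eqref{eq:MTInf2} are what is actually used downstream.
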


Note that the functions $\zeta$ and $\eta$ depend on the lower bound $\mathcal{C}$.

\begin{proof}
    From Proposition \ref{PropDer} we can express the derivatives of $\BS_{\Put}$ as  
    \begin{align}\label{eq:TheoDasDer}
        \left\lvert \frac{\partial^{\lvert\xi\rvert} \BS_{\Put} }{\partial x^{\xi_x} \partial y^{\xi_y}} (x,y) \right\rvert & = \frac{|A_\xi|}{x^n y^{\nicefrac{m}{2}}} \phi(d_+(x,y)) \left| F_\xi (d_+(x,y), d_-(x,y), \sqrt{y})\right|
    \end{align} for $(x,y)\in (0,\infty)\times [\mathcal{C},\infty)$, where $m\in\mathds{N}$, $n\in\mathds{Z}$, $A_\xi\in\mathds{R}$ and $F_\xi$ is a polynomial in $d_+(x,y)$, $d_-(x,y)$ and $\sqrt{y}$, with degree at least one in either $d_+(x,y)$ or $d_-(x,y)$.  The arguments in the proof of Lemma 5.1 of \cite{das2022closed} apply directly in this case and allow us to obtain
    \begin{align*}
        \lim_{x\to 0} \left\lvert \frac{\partial^{\lvert\xi\rvert} \BS_{\Put} }{\partial x^{\xi_x} \partial y^{\xi_y}} (x,y) \right\rvert &=  
        \lim_{x\to \infty} \left\lvert \frac{\partial^{\lvert\xi\rvert} \BS_{\Put} }{\partial x^{\xi_x} \partial y^{\xi_y}} (x,y) \right\rvert =  0  \text{ for all } y\in [\mathcal{C},\infty),\\
        \lim_{y\to \infty} \left\lvert \frac{\partial^{\lvert\xi\rvert} \BS_{\Put} }{\partial x^{\xi_x} \partial y^{\xi_y}} (x,y) \right\rvert & =  0 \text{ for all } x\in (0,\infty).
    \end{align*} Hence equation \eqref{eq:TheoDasDer} satisfies conditions \eqref{eq:PropDascond1}--\eqref{eq:PropDascond2}, and we can apply Lemma \ref{lemma:Das}. This means that the absolute value of the derivative of $\BS_{\Put}$ has a maximum on $(0,\infty)\times[\mathcal{C},\infty)$. There exists a point $(x_*,y_*) \in (0,\infty)\times[\mathcal{C},\infty)$ that satisfies\begin{align}
    &\max_{(x,y) \in (0,\infty)\times [\mathcal{C},\infty) }\left\lvert \frac{\partial^{\lvert\xi\rvert} \BS_{\Put} }{\partial x^{\xi_x} \partial y^{\xi_y}} (x,y) \right\rvert\nonumber \\
    &\quad  = \frac{|A_\xi|}{x_*^n y_*^{\nicefrac{m}{2}}} \phi(d_+(x_*,y_*)) \left| F_\xi (d_+(x_*,y_*), d_-(x_*,y_*), \sqrt{y_*})\right|.\label{eq:MaxDer}
    \end{align}
Proceeding as in the proof of Lemma 5.2 of \cite{das2022closed}, if we fix all the variables, except the expiry date $T$,  equation \eqref{eq:MaxDer} can be expressed as
        \begin{align*}
        \zeta(T) &= A_\zeta e^{-E_\zeta (rT)^2 - D_\zeta rT} \left|\sum_{k=0}^{n_\zeta} c_{k\zeta} (rT)^k\right|,
    \end{align*}  where $D_\zeta\in\mathds{R}$, $n_{\zeta}\in\mathds{N}_0$, $c_{0\zeta},\ldots,c_{n_\zeta\zeta}\in\mathds{R}$ and     $A_\zeta, E_\zeta>0$. It is clear that when $T\to\infty$, $\zeta$ goes to zero, and when $T\to 0$, the limit of $\zeta$ is $A_\zeta |c_{0\zeta}|$. Similarly, if we fix all the variables except the strike price $K$, then \eqref{eq:MaxDer} becomes
        \begin{align*}
        \eta(K) &= A_\eta K^{-E_\eta \ln K - D_\eta} \left|\sum_{k=0}^{n_\eta} (-1)^k c_{k\eta} (\ln K)^k\right|,
    \end{align*}
    where $D_\eta\in\mathds{R}$, $n_{\eta}\in\mathds{N}_0$, $c_{0\eta},\ldots,c_{n_\eta\eta}\in\mathds{R}$ and     $A_\eta,E_\eta>0$. The function $\eta$ goes to zero when $K\to \infty$ or when $K\to 0$. \end{proof}


We now consider how the maximum will behave when the volatility can reach the value zero. 

\begin{proposition}\label{Prop:ZeroBound}
  The function $M_{\xi,\mathcal{C}}:[0,\infty)\times(0,\infty)\rightarrow[0,\infty)$ defined in Proposition \ref{Prop:GeneralBound} satisfies \begin{equation} \label{eq:limit0}
      \lim_{\mathcal{C}\to 0 } M_{\xi,\mathcal{C}}(T,K)  = \infty  \text{ for all } T,K>0.
  \end{equation}
\end{proposition}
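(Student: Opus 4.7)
The plan is to prove that the derivative $\bigl|\tfrac{\partial^{|\xi|}\BS_{\Put}}{\partial x^{\xi_x}\partial y^{\xi_y}}\bigr|$ is unbounded on $(0,\infty)\times(0,\infty)$, with the blow-up realised by a sequence whose $y$-coordinate tends to zero. If I can exhibit, for each $L>0$, a point $(x_L,y_L)\in(0,\infty)\times(0,\infty)$ where the derivative exceeds $L$ in absolute value, then $(x_L,y_L)$ lies in the feasibility set of $M_{\xi,\mathcal C}(T,K)$ for every $\mathcal C\le y_L$, and hence $M_{\xi,\mathcal C}(T,K)>L$, which proves \eqref{eq:limit0}.

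To construct such points I would probe the singularity of $\BS_{\Put}$ at $(Ke^{-rT},0)$ along the family of curves $x(y)=Ke^{-rT+a\sqrt y}$, parametrised by $a\in\mathbb R$. A direct calculation from \eqref{eq:d+-} gives $d_\pm(x(y),y)=a\pm\tfrac12\sqrt y$, so Proposition \ref{PropDer} yields
\[
\frac{\partial^{|\xi|}\BS_{\Put}}{\partial x^{\xi_x}\partial y^{\xi_y}}\bigl(x(y),y\bigr)=\frac{A_\xi}{x(y)^n\,y^{m/2}}\,\phi\bigl(a+\tfrac12\sqrt y\bigr)\,F_\xi\bigl(a+\tfrac12\sqrt y,\,a-\tfrac12\sqrt y,\,\sqrt y\bigr).
\]
As $y\to0^+$ the factor multiplying $y^{-m/2}$ converges to $A_\xi(Ke^{-rT})^{-n}\phi(a)\,\tilde F_\xi(a)$, where $\tilde F_\xi(a):=F_\xi(a,a,0)$. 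Since $m\ge 1$, the absolute value of the derivative diverges along this curve as soon as $\tilde F_\xi(a)\ne 0$ for some $a\in\mathbb R$, and the required point $(x_L,y_L)$ is then obtained by taking $y_L$ small enough.

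The only substantive step, and the main obstacle, is to check that the polynomial $\tilde F_\xi$ is not identically zero for every $\xi$ with $|\xi|\ge 3$. I would establish this by induction on $|\xi|$, using the recursive identities extracted from the proof of Proposition \ref{PropDer}, namely
\[
F_{\xi+e_x}=-(d_++n\sqrt y)F_\xi+G_\xi,\qquad F_{\xi+e_y}=(d_+d_--m)F_\xi+H_\xi,
\]
with $G_\xi=\partial_{d_+}F_\xi+\partial_{d_-}F_\xi$ and $H_\xi=-d_-\partial_{d_+}F_\xi-d_+\partial_{d_-}F_\xi+\sqrt y\,\partial_{\sqrt y}F_\xi$. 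Evaluating at $(a,a,0)$ and invoking the chain rule identity $\tfrac{d}{da}\tilde F_\xi(a)=(\partial_{d_+}F_\xi+\partial_{d_-}F_\xi)(a,a,0)$, these collapse to the one-variable recursions
\[
\tilde F_{\xi+e_x}(a)=-a\,\tilde F_\xi(a)+\tilde F_\xi'(a),\qquad \tilde F_{\xi+e_y}(a)=(a^2-m)\tilde F_\xi(a)-a\,\tilde F_\xi'(a).
\]
If $\tilde F_\xi$ is a nonzero polynomial of degree $d$ with leading coefficient $c$, the leading terms of $\tilde F_{\xi+e_x}$ and $\tilde F_{\xi+e_y}$ are $-ca^{d+1}$ and $ca^{d+2}$ respectively, both nonzero. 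The base cases $|\xi|=2$ follow immediately from Remark \ref{remark:second}: $\tilde F_{(2,0)}\equiv 1$, $\tilde F_{(1,1)}(a)=a$, and $\tilde F_{(0,2)}(a)=a^2-1$. This closes the induction and completes the proof.
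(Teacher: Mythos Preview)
Your argument is correct. Both you and the paper establish the result by showing that the partial derivative $\bigl|\tfrac{\partial^{|\xi|}\BS_{\Put}}{\partial x^{\xi_x}\partial y^{\xi_y}}\bigr|$ blows up as $y\to 0^+$ near the point $x=Ke^{-rT}$, which forces $M_{\xi,\mathcal C}(T,K)\to\infty$ as $\mathcal C\to 0$. The paper's proof is a two-line appeal to Lemma~5.1 of \citet{das2022closed}, asserting that the derivative diverges along the vertical line $x=Ke^{-rT}$; in your parametrisation this is the case $a=0$, where $d_\pm=\pm\tfrac12\sqrt y$ and one must still control the order of vanishing of $F_\xi\bigl(\tfrac12\sqrt y,-\tfrac12\sqrt y,\sqrt y\bigr)$ against the factor $y^{-m/2}$ (a point the paper delegates to the cited reference).

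Your route is genuinely different and more self-contained: by allowing $a\neq 0$ you freeze $d_\pm\to a$ as $y\to0$, so the prefactor of $y^{-m/2}$ converges to the \emph{constant} $A_\xi(Ke^{-rT})^{-n}\phi(a)\tilde F_\xi(a)$, and the only thing to check is that this constant is nonzero for some $a$. The one-variable recursions you extract for $\tilde F_\xi$ are correct, and the leading-coefficient induction (degree increases by $1$ under $e_x$ and by $2$ under $e_y$, leading coefficient never vanishes) is clean and avoids any rate-of-vanishing analysis. The price is that you must also note $A_\xi\neq 0$, which follows from the same recursion in the proof of Proposition~\ref{PropDer} ($A_{\xi+e_x}=A_\xi$, $A_{\xi+e_y}=A_\xi/2$) together with the nonzero second-order constants in Remark~\ref{remark:second}; this is implicit in your write-up but worth stating. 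Overall your argument trades an external citation for a short induction, and in doing so sidesteps a subtlety that the paper's vertical-line approach would otherwise have to address.
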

\begin{proof}
    By application of Proposition \ref{PropDer} and the proof of Lemma 5.1 of \citet{das2022closed} we have that \begin{equation}\label{eq:LimInft}
          \lim_{y\to 0} \left\lvert \frac{\partial^{\lvert\xi\rvert} \BS_{\Put} }{\partial x^{\xi_x} \partial y^{\xi_y}} (x,y) \right\rvert  =\begin{cases}
              \infty  & \text{when } \ln \frac{x}{K} + rT =0, \\
              0 &  \text{when } \ln \frac{x}{K} + rT \neq 0.
          \end{cases}
    \end{equation} From equation \eqref{eq:LimInft} it is clear that the supremum on the set $(0,\infty)\times[0,\infty)$ is reached at $\left(x_*,y_*\right) =  \left(Ke^{-rT}, 0\right)$ and hence \eqref{eq:limit0} is satisfied.
\end{proof}
    
The next result comes from the fact that $Z$ is an L\'evy subordinator. Since $Z$ is a non-negative and a non-decreasing process, $\alpha_{s,T}\geq 0$ for all $s\in[0,T]$ and by equation \eqref{eq:I-SDE}, we have the lower bounds
$$ I_T = \sigma^2_0 \alpha_{0,T} +  \int_0^T \alpha_{s,T} dZ_{\lambda s} \geq \sigma^2_0 \alpha_{0,T} >0 $$
for all $T>0$ and
\begin{equation}
      (1-u)E_{\mathds{Q}}[I_T]+uI_T  \geq \sigma^2_0 \alpha_{0,T}\label{eq:JLowerBound} 
\end{equation} for all $u \in(0,1)$ and every $T>0$, with both bounds holding almost surely. Hence, we obtain the following result from \eqref{eq:JLowerBound} and Proposition \ref{Prop:GeneralBound}.
\begin{proposition}\label{cor:boundsDer}
    For every $\xi=(\xi_x,\xi_y)$, $\xi_x,\xi_y\in\mathds{N}_0$ such that $\lvert\alpha\rvert = \xi_x+\xi_y\ge3$, there exists a function $M_{\xi,\beta(T)}:[0,\infty)\times(0,\infty)\rightarrow[0,\infty)$ such that
    \[
    \left\lvert \frac{\partial^{\lvert\xi\rvert} \BS_{\Put} }{\partial x^{\xi_x} \partial y^{\xi_y}} ((1-u)S_0+uP_TS_0,(1-u)E_{\mathds{Q}}[I_T]+uI_T) \right\rvert \le M_{\xi,\beta(T)}(T,K)
    \]  
    for all $u\in(0,1)$ almost surely, where $M_{\xi,\beta(T)}$ is defined as in Proposition \ref{Prop:GeneralBound} and $\beta(T) = \sigma^2_0 \alpha_{0,T}$.  In addition,
    \begin{align}
    &\lim_{T\rightarrow0}M_{\xi,\beta(T)} (T,K)   = \infty  \text{ for all } K>0,\label{eq:MT03} \\
    &\lim_{K\rightarrow0}M_{\xi,\beta(T)}(T,K)  = \lim_{K\rightarrow\infty}M_{\xi,\beta(T)}(T,K) = 0 \text{ for all } T>0, \label{eq:MK0Inf3}\\
        &\text{if } r\neq 0, \text{ then } \lim_{T\rightarrow\infty}M_{\xi,\beta(T)}(T,K) = 0 \text{ for all } K>0.\label{eq:MTInf3}
    \end{align}
\end{proposition}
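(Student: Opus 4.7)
First, I would establish the pointwise bound by verifying that the arguments at which the derivative is evaluated lie almost surely in the domain of Proposition \ref{Prop:GeneralBound}. Since $S_0>0$ and $P_T>0$ almost surely (the latter being an exponential), the first argument $(1-u)S_0+uP_TS_0$ is strictly positive almost surely for every $u\in(0,1)$. For the second argument, the bound \eqref{eq:JLowerBound} gives
\[
(1-u)E_{\mathds{Q}}[I_T]+uI_T \;\geq\; \sigma_0^2\alpha_{0,T} \;=\; \beta(T) \qquad \text{almost surely.}
\]
Setting $\mathcal{C}=\beta(T)$ (which is strictly positive for $T>0$), Proposition \ref{Prop:GeneralBound} applied on the set $(0,\infty)\times[\beta(T),\infty)$ yields the deterministic upper bound $M_{\xi,\beta(T)}(T,K)$ on the relevant derivative, establishing the main inequality.

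Next I would derive each of the three limiting assertions by transferring the limits from Proposition \ref{Prop:GeneralBound} to the present setting, where now the lower bound $\mathcal{C}=\beta(T)$ itself varies with $T$. For fixed $T>0$, $\beta(T)$ is a fixed positive constant, so equation \eqref{eq:MK0Inf3} follows immediately from \eqref{eq:MK0Inf2}. For \eqref{eq:MT03}, observe that $\beta(T)=\sigma_0^2\alpha_{0,T}\to 0$ as $T\to 0$; Proposition \ref{Prop:ZeroBound} asserts that $M_{\xi,\mathcal{C}}(T,K)\to\infty$ as $\mathcal{C}\to 0$, driven by the singularity of the derivative near the point $(Ke^{-rT},0)$ exhibited in \eqref{eq:LimInft}, and this critical point varies continuously in $T$, so the divergence persists along the trajectory $T\to 0$ with $\mathcal{C}=\beta(T)$. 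For \eqref{eq:MTInf3}, as $T\to\infty$ the lower bound $\beta(T)$ converges to $\sigma_0^2/\lambda>0$, hence stays bounded away from zero, while the profile $\zeta(T)=A_\zeta e^{-E_\zeta(rT)^2-D_\zeta rT}|\sum c_{k\zeta}(rT)^k|$ in Proposition \ref{Prop:GeneralBound} decays to zero due to the Gaussian factor whenever $r\neq 0$.

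The main subtlety is the coupling between the $T$-dependence of the evaluation point and the $T$-dependence of the lower bound $\mathcal{C}=\beta(T)$. Neither of Propositions \ref{Prop:GeneralBound} and \ref{Prop:ZeroBound} addresses the limits of $M_{\xi,\mathcal{C}}(T,K)$ in $T$ and $\mathcal{C}$ simultaneously; one must argue that the qualitative behaviour is unaffected by letting $T$ and $\mathcal{C}$ vary together. For $T\to 0$, this is handled by the monotonicity of $M_{\xi,\mathcal{C}}(T,K)$ in $\mathcal{C}$ (shrinking $\mathcal{C}$ enlarges the domain of maximisation) combined with the fact that $\zeta(T)$ remains bounded away from zero as $T\to 0$; for $T\to\infty$, it is handled by the boundedness of $\beta(T)$ together with the Gaussian decay of $\zeta$. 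With these observations the three limits follow directly, completing the proof.
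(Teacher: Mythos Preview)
Your proposal is correct and follows essentially the same approach as the paper: bound the arguments via \eqref{eq:JLowerBound}, apply Proposition \ref{Prop:GeneralBound} with $\mathcal{C}=\beta(T)$, then read off the $K$-limits from \eqref{eq:MK0Inf2}, the $T\to 0$ limit from Proposition \ref{Prop:ZeroBound} (since $\beta(T)\to 0$), and the $T\to\infty$ limit from \eqref{eq:MTInf2} (since $\beta(T)\to\sigma_0^2/\lambda>0$). Your discussion of the coupling between $T$ and $\mathcal{C}=\beta(T)$ is in fact more careful than the paper's own proof, which simply asserts the limits without addressing this point.
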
\begin{proof}
    If we fix a expiry date $T>0$, then from Proposition \ref{Prop:GeneralBound} and \eqref{eq:JLowerBound}, we have that
        \[
    \sup_{u\in(0,1)}\left\lvert \frac{\partial^{\lvert\xi\rvert} \BS_{\Put} }{\partial x^{\xi_x} \partial y^{\xi_y}} ((1-u)S_0+uP_TS_0,(1-u)E_{\mathds{Q}}[I_T]+uI_T) \right\rvert = M_{\xi,\beta(T)}(T,K),
    \] almost surely. The result in \eqref{eq:MK0Inf3} comes from \eqref{eq:MK0Inf2}. Observe that the lower bound of $(1-u)E_{\mathds{Q}}[I_T]+uI_T$ depends on the expiry date $T$. This means that the limits of $M_{\xi,\beta(T)}$ when $T\to 0$ and $T\to \infty$, have to be treated carefully. When $T \to \infty$, we have that
    $$
       \lim_{T\to \infty} \beta(T)  = \frac{\sigma^2_0}{\lambda}>0.
    $$
    Since this limit is bounded away from zero, from Proposition \ref{Prop:GeneralBound} we obtain the result in \eqref{eq:MTInf3}. The result in \eqref{eq:MT03} is satisfied due to Proposition \ref{Prop:ZeroBound} because
    \begin{equation*}
       \lim_{T\to 0} \beta(T)  = 0. \qedhere
    \end{equation*}
\end{proof}

Proposition \ref{cor:boundsDer} allows us to construct deterministic bounds for the derivatives that appear in the remainder term \eqref{eq:Taylor-remainder-N}. These bounds are finite for every $T,K>0$ and are well-behaved for small values of $K$ and for large values of $K$ and $T$. However, these bounds are large when $T$ is near zero.

\section{Characteristic functions}\label{CFsect}
In this section, we present the characteristic functions of the log price that are used in Section \ref{NumExamplesSect}. We use the characteristic function of the log price to compute the values of European put prices using the formula given by \citet{carr1999option}. We assume that the options prices given by the characteristic function of the log price, represent the true price of options. In Section \ref{NumExamplesSect}, we have compared the option price given by the characteristic function with the approximation method. For a general Barndorff-Nielsen and Shephard model, the characteristic function of the log price at time $T>0$ is:
\begin{align}
E\left[e^{iuX_T}\right]  &=  e^{iuX_0 +iurT -iu\lambda\kappa(\rho) T - \frac{1}{2}(iu+u^2) \frac{\sigma_0^2}{\lambda} \left( 1-e^{-\lambda T}\right)} \nonumber\\
& e^{\lambda \int_0^T \kappa\left( iu \rho - \frac{1}{2} (iu+u^2) \frac{1}{\lambda} \left( 1- e^{-\lambda(T-s)}\right)\right) ds}, \label{CF_alleq}
\end{align}
where $\kappa$ is the cumulant generating function of $Z_1$ \citep{nicolato2003option}. We observe that the characteristic function of $X_T$ depends on the type of $D$-Ornstein-Uhlenbeck process we use for modelling the variance process. In this paper, we focus only on two types of $D$-Ornstein-Uhlenbeck process. From equation \eqref{CF_alleq} it is clear that all Barndorff-Nielsen and Shephard models have the same characteristic function except for the Riemann integral that appears in equation  \eqref{CF_alleq}. The value of this integral depends on the type of $D$-Ornstein-Uhlenbeck process.
\begin{enumerate}
    \item  When the variance process $\sigma^2$ follows an inverse Gaussian\break Ornstein-Uhlenbeck process, the integral that appears in equation \eqref{CF_alleq} satisfies the following equation
\begin{align}
&\lambda \int_0^T \kappa\left( iu \rho - \frac{1}{2} (iu+u^2) \alpha_{s,T}\right) ds  \nonumber\\
   & \qquad =    a \left(\sqrt{b^2 -2f_1(u)} -\sqrt{b^2-2iu\rho}\right) + \frac{2a f_2(u)}{\sqrt{2f_2(u)-b^2}} \nonumber\\
   & \qquad \qquad \left[ \arctan\left( \sqrt{\frac{b^2-2iu\rho}{2f_2(u)-b^2}}\right) - \arctan\left(\sqrt{\frac{b^2 - 2f_1(u)}{2f_2(u)-b^2}}\right)\right],\label{CFIG}
\end{align} where
\begin{align}
     f_1(u) & = iu\rho -\frac{1}{2 \lambda}(u^2 + iu)\left(1-e^{-\lambda T}\right), \label{f1eq}\\
    f_2(u) & = iu\rho -\frac{1}{2 \lambda}(u^2 + iu), \label{f2eq}
\end{align} \citep{nicolato2003option}.
\item In the case the process $\sigma^2$ is a gamma Ornstein-Uhlenbeck process, the integral of equation \eqref{CF_alleq} can be written as
\begin{multline}
\lambda \int_0^T \kappa\left( iu \rho - \frac{1}{2} (iu+u^2) \alpha_{s,T}\right) ds \\
  \begin{aligned}\label{CFGamma}
  =&    a(b-f_2(u))^{-1} \left(b\ln\left(\frac{b-f_1(u)}{b-iu\rho}\right) + f_2(u)\lambda T \right),
  \end{aligned}
\end{multline} where $f_1$ and $f_2$ are defined as in equations \eqref{f1eq} and \eqref{f2eq} respectively \citep{nicolato2003option}.
\end{enumerate}

\begin{remark}
We detect a small typo in the characteristic functions of Barndorff-Nielsen and Shephard models given by \citet{nicolato2003option} and by \citet[p.~87]{schoutens2003levy}. This typo is related to the functions $f_1$ and $f_2$ defined in equations \eqref{f1eq} and \eqref{f2eq} respectively. In the formula given by  \citet{nicolato2003option} and \citet[p.~87]{schoutens2003levy}, the factor $\frac{1}{\lambda}$ appears to be missing in $f_1$ and $f_2$. For example, to obtain equation \eqref{CFGamma} it is possible to show that
\begin{multline}
\int_0^T \kappa\left( iu \rho - \frac{1}{2} (iu+u^2) \frac{1}{\lambda} \left( 1- e^{-\lambda(T-s)}\right)\right) ds \label{IntegralGamma}
\\
  \begin{aligned}
  =&  \left.\frac{2a}{ \lambda  u^{2}+\left(-2 i \lambda^{2} \rho +i \lambda \right) u+2 \lambda^{2} b} F_\kappa(s) \right|_0^T,
  \end{aligned} 
\end{multline}where
\begin{multline*}
    F_\kappa(s) = - \ln\! \left(u \left(i+u\right) {\mathrm e}^{-\lambda \left(T-s\right)}-u^{2}+i \left(2 \lambda  \rho -1\right) u-2 b \lambda \right) b \lambda \\
    + u \ln\! \left({\mathrm e}^{-\lambda  \left(T-s\right)}\right) \left(i \rho  \lambda -\frac{1}{2} i-\frac{1}{2} u\right).
\end{multline*}
Notice that the denominator in equation \eqref{IntegralGamma} can be written as
\[\lambda  u^{2}+\left(-2 i \lambda^{2} \rho +i \lambda \right) u+2 \lambda^{2} b  = 2\lambda^2 (b-f_2(u)).\]
For the numerator, we need to evaluate the function $F_\kappa$ at $s=T$ and at $s=0$. For $F_\kappa(T)$ we obtain \[  F_\kappa(T)  = -\ln\left(\left(\rho  i u-b\right) 2\lambda \right) \] and in the case of $F_\kappa(0)$ we have
\begin{align*}
  F_\kappa(0)
  & = -\ln\left(2\lambda \left(f_1(u) -b \right) \right) b\lambda  - \lambda^2 T f_2(u).
\end{align*}
So from equation \eqref{IntegralGamma} we have that
\begin{multline*}
\int_0^T \kappa\left( iu \rho - \frac{1}{2} (iu+u^2) \frac{1}{\lambda} \left( 1- e^{-\lambda(T-s)}\right)\right) ds
\\
  \begin{aligned}
  &=  \frac{2a}{2\lambda^2 (b-f_2(u))} \left( F_\kappa(T) - F_\kappa(0) \right)\\
  &= \frac{a}{\lambda^2 (b-f_2(u))} \left( -\ln\left(\left(\rho  i u-b\right) 2\lambda \right) +\ln\left(2\lambda \left(f_1(u) -b \right) \right) b\lambda + \lambda^2 T f_2(u) \right)\\
  &=  \frac{a}{\lambda (b-f_2(u))} \left( b\ln\left(\frac{b-f_1(u)}{b-iu\rho}\right)  + \lambda Tf_2(u) \right),
  \end{aligned}
\end{multline*} as required. It is also possible to show the result that appear in equation \eqref{CFIG}, but the proof is quite tedious and it is omitted here.
\end{remark}

\end{appendices}

\bibliography{ref}

\end{document}